\newcommand{\E}{\mathbb E}
\newcommand{\beq}{\begin{IEEEeqnarray*}{rCl}}
\newcommand{\eeq}{\end{IEEEeqnarray*}}
\newcommand{\mc}{\IEEEeqnarraymulticol{3}{l}}
\theoremstyle{definition}  \newtheorem{defn}{Definition}
\theoremstyle{plain}	\newtheorem{thm}{Theorem}
\theoremstyle{plain}	\newtheorem{prop}{Proposition}
\theoremstyle{plain}	\newtheorem{cor}{Corollary}
\theoremstyle{plain}  \newtheorem{lem}{Lemma}
\theoremstyle{remark} \newtheorem{rem}{Remark}
\begin{document}

\title{The Merton Problem with a Drawdown Constraint on Consumption}
\author{T. Arun \begin{footnote} {Statistical Laboratory, University of Cambridge, Wilberforce Road, Cambridge \mbox{CB3 0WB}, UK;
arun@statslab.cam.ac.uk.}\end{footnote}}
\date{October 18, 2012}

\maketitle

\begin{abstract}
In this paper, we work in the framework of the Merton problem \cite{M} but we impose a drawdown constraint on the consumption process. This means that consumption can never fall below a fixed proportion of the running maximum of past consumption. In terms of economic motivation, this constraint represents a type of habit formation where the investor is reluctant to let his standard of living fall too far from the maximum standard achieved to date. We use techniques from stochastic optimal control and duality theory to obtain our candidate value function and optimal controls, which are then verified.   
\end{abstract}
\vspace{1cm}

\textbf{Keywords:}  Merton problem, \mbox{Hamilton--Jacobi--Bellman equation}, \\ \mbox{drawdown} constraint, duality.
\\
\\
\textbf{Mathematics Subject Classification:}  49L20, 90C46, 91G10, 91G80.
\\
\\
\textbf{JEL Classification:}  C61, G11.

\newpage

\section{Introduction}

The Merton problem -- a question about optimal portfolio selection and consumption in continuous time -- is indeed ubiquitous throughout the mathematical finance literature. Since Merton's seminal paper \cite{M} in 1971, many variants of the original problem have been put forward and extensively studied to address various issues arising from economics. For example, Fleming and Hern\'andez--Hern\'andez \cite{FH} considered the case of optimal investment in the presence of stochastic volatility. Davis and Norman \cite{DN}, Dumas and Luciano \cite{DL}, and more recently Muhle-Karbe and co-authors \cite{CMS}, \cite{GM}, \cite{ML} addressed optimal portfolio selection under transaction costs. Rogers and Stapleton \cite{RS} considered optimal investment under time-lagged trading. Vila and Zariphopoulou \cite{VZ} studied optimal consumption and portfolio choice with borrowing constraints. The effects of different types of habit formation on optimal investment and consumption strategies have been explored in \cite{C}, \cite{I}, and \cite{Mu}.  
\\
\\
A particular class of constrained optimal investment problems that forms an important and recurring theme in mathematical finance is optimal investment under a drawdown constraint. This constraint, roughly speaking, means that a certain parameter has to remain above a fixed proportion of the running maximum of its past values. Drawdown constraints on wealth have been studied by Elie and Touzi \cite{ET}, and Roche \cite{Ro}. Carraro, El Karoui, and Ob{\l}\'oj \cite{CEO}, and Cherny and Ob{\l}\'oj \cite{CO} studied drawdown constraints in more general semimartingale settings via Az\'ema--Yor processes. Grossman and Zhou \cite{GZ} considered the problem of maximising the long-term growth rate of expected utility of final wealth, subject to a drawdown constraint. 
\\
\\
The case we consider in this paper is the Merton problem with a drawdown constraint on consumption. Under this condition, the investor cannot let consumption fall below a fixed proportion of the running maximum of past consumption. In mathematical terms, we have that our consumption at time $t$, $c_t$, satisfies
\[
c_t \geq b \sup_{0 \leq s \leq t} c_s \equiv b \bar{c}_t
\]
for a fixed proportion $0 \leq b \leq 1$. 
\\
\\
In terms of economic motivation, this represents a type of habit formation where once the investor has reached a certain standard of living, he is reluctant to let his standard of living to fall too far from that level. Clearly, the case $b=0$ is just the standard Merton problem, and taking $b=1$ gives the special case where consumption is constrained to be non-decreasing. The $b=1$ case was investigated by Dybvig \cite{D} in 1995, and like the standard Merton problem it is possible to obtain an explicit solution in this case. However, taking $0 < b <1$ gives a continuum of cases between these two extremes where the parameter $b$ in a sense represents the willingness of the investor to sacrifice a proportion of his current standard of living in exchange for greater utility in the long-run. 
\\
\\
To be precise, we consider an agent who can invest in a risk-free bank account and a risky stock modelled by geometric Brownian motion. The agent seeks to maximise the expected infinite horizon discounted utility of consumption by finding the optimal portfolio selection and consumption strategies -- subject to the drawdown constraint on consumption. 
\\
\\
We work with CRRA utility and consider the dual formulation of the problem. The dual problem is significantly easier to handle and has an explicit analytic solution. We invert this to obtain our candidate value function and optimal controls. To prove optimality, we modify the approach of Dybvig \cite{D} (who considered the case where consumption is non-decreasing). The key parameter in this problem is the ratio of the investor's wealth to the running maximum of past consumption. For the optimal solution, we observe four different regions of behaviour based on the value of this parameter. For low values, consumption is restricted to the minimal level possible without violating the drawdown constraint. As the ratio increases, consumption increases with wealth. In the third region, we consume at the highest recorded level of consumption to date while we wait for the ratio to hit a critical level, after which we increase consumption to a new maximum. We specify the boundaries of these regions explicitly, as well as the optimal portfolio selection and consumption rules in each case. 
\\
\\
This paper is organised as follows. In section \ref{marketmodeldraw}, we outline the market model that we will be working in. In sections \ref{Rnot1} and \ref{Ris1}, we provide an informal but intuitive derivation of the value function and optimal controls for $R \neq 1$ and $R=1$, where $R$ represents the investor's coefficient of relative risk aversion. Section \ref{verification} provides a rigorous verification argument to prove the optimality of our conjectured solution. Finally, in section \ref{illposed}, we give an argument to show that, just like in the standard Merton problem, the case we consider here is  ill-posed for $R \leq R^*$ for a certain $0< R^* <1$ which we specify.

\section{Market model} \label{marketmodeldraw}

We work in the framework of the standard Merton problem. Formally, we have a risk-free bank account with constant interest rate, $r$, and a risky stock, $S$, with price dynamics given by
\[
dS_t = S_t(\sigma dW_t + \mu dt)
\]
for constant volatility, $\sigma$, and constant drift, $\mu$, where $(W_t)_{t \geq 0}$ is a standard Brownian motion. Thus our wealth evolves according to the following wealth equation 
\begin{equation} \label{wealth equation}
dw_t = rw_t dt + \theta_t(\sigma dW_t + (\mu - r) dt) - c_t dt
\end{equation}
where
\beq
w_t & = & \text{our wealth at time }t
\\
c_t & = & \text{our consumption at time }t
\\
\theta_t & = & \text{the wealth in the stock at time }t.
\eeq
To make the stock attractive to the investor, we assume that $\mu > r$. We also take $r > 0$ (so we exclude to zero interest rate case) which will in fact turn out to be a necessary condition for the existence of a solution, as shown in Corollary \ref{feasibility}. 
\\
\\
We want to maximize the expected infinite horizon discounted utility of consumption
\begin{equation}
\E \left[\int_0^\infty e^{-\rho t}U(c_t) dt \right]
\end{equation}
subject to a drawdown constraint on consumption
\begin{equation}
c_t \geq b \bar{c}_t \equiv b \sup_{s \leq t} c_s 
\end{equation}
for some $0 < b <1$. Note that taking $b=0$ gives the standard Merton problem \cite{M} and taking $b=1$ gives the non-decreasing consumption case considered by Dybvig \cite{D}. One can check that the analysis put forward in this paper simplifies to the solutions given by Merton and Dybvig for $b \in \{0,1\}$, but to avoid denegerate cases we will restrict our attention to $0 < b < 1$. 
\\
\\
We take the agent's utility function, $U$, to be of constant relative risk aversion (CRRA), that is $U(x) = \frac{x^{1-R}}{1-R}$ for $R \neq 1$, and $U(x) = \log x$ for $R = 1$, where $R$ is a positive real number which represents the investor's coefficient of relative risk aversion. However, for the problem to be well-posed we need to choose $R$ such that $\gamma_M >0$ where 
\begin{equation} \label{gamma_M}
\gamma_M = \frac{1}{R} \left[\rho - (1-R) \left(r + \frac{\kappa^2}{2R} \right) \right]
\end{equation}
and $\kappa = \frac{\mu-r}{\sigma}$. This is equivalent to taking $R>R^*$ for a particular $0 < R^* < 1$ given by
\begin{equation} \label{Rstar}
R^* = \frac{1}{2r} \left[-\left(\rho - r + \frac{\kappa^2}{2} \right) + \sqrt{\left(\rho - r + \frac{\kappa^2}{2} \right)^2 + 2r\kappa^2} \right]
\end{equation}
In section \ref{illposed}, we will show that, as in the standard Merton problem, if we do not have this condition then it is possible to find strategies that give infinite expected utility. 
\\
\\
Lastly, we insist that our investment and consumption strategy, $(\theta, c)$, is \emph{admissible} i.e. $w_t \geq 0$ almost surely for all $t \geq 0$.
\\
\\
In the next section, we will give a systematic but, in some places, informal derivation of the value function and optimal controls for $R \neq 1$. To avoid confusion, we defer the $R=1$ case until section \ref{Ris1}. A rigorous verification argument is given in section \ref{verification}. 

\section{Identifying the optimal controls and the value function for $R \neq 1$} \label{Rnot1}

We call a strategy, $(\theta, c)$, \emph{feasible} if it satisfies the drawdown constraint. We will see that necessary conditions for feasibility are that $r > 0$ and $r w_t \geq b \bar{c}_t$ almost surely for all $t \geq 0$. An intuitive explanation for why this is true is the following. To be able to sustain indefinitely consumption at a rate $c_t \geq b \bar{c}_t$, the consumption would have to be taken from a source of income that is guaranteed, so can only be taken from the interest from the bank account. For this to be possible we need to have $r > 0$ and $r w_t \geq b \bar{c}_t$ for all $t \geq 0$, since the second inequality means that the maximum possible interest that can be gained from wealth is at least the minimum amount that must be consumed. A proof of this statement is given under Corollary \ref{feasibility} in section \ref{verification}. 
\\
\\
We begin as one usually does for problems of this type -- by defining the value function. In contrast to the standard Merton story, our value function, $V(\cdot, \cdot)$, depends on two variables instead of one. Define
\begin{equation} \label{value function defn} 
V(w, \bar{c}) \equiv \sup_{c \geq b \bar{c}, \theta} \E \left[ \int_0^\infty e^{-\rho t}U(c_t) dt \Big|w_0=w, \bar{c}_0= \bar{c} \right].
\end{equation}
Now, let  
\[
Y_t = e^{-\rho t} V(w_t, \bar{c}_t) + \int_0^t e^{-\rho s} U(c_s) ds.
\]
By the Davis--Varaiya Martingale Principle of Optimal Control \cite{DV}, we should have that $Y$ is a supermartingale for all controls, and there exist optimal controls (to be found) such that $Y$ is a true martingale. In what follows, we will use this condition to derive the Hamilton--Jacobi--Bellman (HJB) equation for this problem. We will show that there is only one function that satisfies the HJB equation, subject to appropriate boundary conditions, and we will take this function as our candidate value function and define candidate optimal controls based on this function. In section \ref{verification}, we will verify that our candidate function really is the value function for this problem, and that our candidate optimal controls are in fact optimal.  
\\
\\
By It\^o's formula, 
\beq
e^{\rho t} dY_t & = & \left[ -\rho V + V_w(rw_t + \theta_t(\mu - r) - c_t) + \frac{1}{2} \sigma^2 \theta^2 V_{ww} + U(c_t) \right] dt 
\\
& & +\: V_{\bar{c}}d\bar{c}_t + V_w \theta_t \sigma dW_t
\eeq
where, for example, $V_w$ represents the partial derivative of $V$ with respect to $w$. We deduce that for $Y$ to be a supermartingale for all controls and a martingale under the optimal control, we require
\begin{equation}
V_{\bar{c}} \leq 0
\end{equation}
and when $c=\bar{c}$ we must have $V_{\bar{c}}=0$. We also require
\begin{equation}
\sup_{c \geq b\bar{c}, \theta} \left[ -\rho V + V_w(rw_t + \theta_t(\mu - r) - c_t) + \frac{1}{2} \sigma^2 \theta_t^2 V_{ww} + U(c_t) \right] = 0.
\end{equation}
Thus the Hamilton--Jacobi--Bellman (HJB) equation for this problem is
\begin{equation}
\max\left\{V_{\bar{c}} , \sup_{c\geq b\bar{c}, \theta} \left[ -\rho V + V_w(rw+\theta(\mu-r) -c) + \frac12 \sigma^2 \theta^2V_{ww} + U(c) \right] \right\} = 0.
\end{equation} 
Now, using scaling, we can reduce the number of dimensions of the problem. To do this, let $\mathcal{A}(w, \bar{c})$ be the set of feasible strategies, $(\theta, c)$, starting from initial wealth, $w$, and initial maximum consumption, $\bar{c}$. 
\\
\\
Take $\lambda >0$. From the linearity of wealth dynamics we have that 
\[
(\theta, c) \in \mathcal{A}(\lambda w, \lambda \bar{c})\quad  \Leftrightarrow \quad (\tilde{\theta}, \tilde{c}) \in \mathcal{A}(w, \bar{c})
\] 
where $(\tilde{\theta}, \tilde{c}) = (\theta/\lambda, c/\lambda)$. Now observe that 
\beq
V(\lambda w, \lambda \bar{c}) & = & \sup_{\theta, c} \E \left[ \int_0^\infty e^{-\rho t} \left( \frac{c_t^{1-R}}{1-R} \right) dt \Big| w_0 = \lambda w, \bar{c}_0 = \lambda \bar{c} \right]
\\
& = & \sup_{\tilde{\theta}, \tilde{c}} \E \left[ \int_0^\infty e^{-\rho t} \left( \frac{(\lambda \tilde{c}_t)^{1-R}}{1-R} \right) dt \Big| w_0 = w, \bar{c}_0 = \bar{c} \right]
\\
& = & \lambda^{1-R} \sup_{\tilde{\theta}, \tilde{c}} \E \left[ \int_0^\infty e^{-\rho t} \left( \frac{\tilde{c}_t^{1-R}}{1-R} \right) dt \Big| w_0 = w, \bar{c}_0 = \bar{c} \right]
\\
& = & \lambda^{1-R} V(w, \bar{c}).
\eeq
Thus taking $\lambda = 1/ \bar{c}$ implies that 
\begin{equation}
V(w, \bar{c}) = \bar{c}^{1-R} V \left(\frac{w}{\bar{c}}, 1 \right) \equiv \bar{c}^{1-R} v(x)
\end{equation}
where $x = w/\bar{c}$, and we take the above equation as the definition of $v$, the \emph{scaled value function}. Note that the feasibility condition mentioned at the start of this section now becomes $x \geq b/r$. We have
\begin{eqnarray}
V_w & = & \bar{c}^{\;-R} v'
\\
V_{ww} & = & \bar{c}^{\;-1-R} v''
\\
V_{\bar{c}} & = & \bar{c}^{\;-R} \{(1-R)v - xv' \}.
\end{eqnarray}
Substituting the above into the HJB equation gives
\begin{eqnarray}
\lefteqn{\max \Bigg\{(1-R)v-xv', } \nonumber
\\
& &  \sup_{y, b \leq s \leq 1} \left[ -\rho v + (rx + y(\mu - r) - s)v' + \frac{1}{2} \sigma^2 y^2 v'' + U(s) \right] \Bigg\} = 0
\end{eqnarray}
where $y = \theta/\bar{c}$ and $s = c/\bar{c}$.  
\\
\\
We assume that there is a type of threshold behaviour, which will be verified later. To be precise, we assume that the first term in the HJB equation is equal to zero iff $x \geq a$ and that the second term is equal to zero iff $x < a$ for some $a$ to be determined. The intuitive reasoning for this is that the first term is only zero when we increase $\bar{c}$ which would only happen if $x$ were large. This is because large $x$ means that our wealth is very large compared to the running maximum of past consumption, so we have more than enough wealth to maintain our current maximum level of consumption, so it is in our best interests to raise $\bar{c}$ and increase consumption from then on. 
\\
\\
Consider the region $x \leq a$ first, which corresponds to the second term in the HJB equation. We can divide this into two maximisation problems. The first is
\[
\sup_{y} \left[ y (\mu - r) v' + \frac{1}{2} \sigma^2 y^2 v'' \right]
\]
and differentiating with respect to $y$ gives
\begin{equation} \label{y_optimal}
y = - \frac{\mu - r}{\sigma^2} \frac{v'}{v''}.
\end{equation}
The second maximisation is 
\[
\sup_{b \leq s \leq 1} \{ U(s) - v' s  \}
\]
which has solution
\begin{equation}
s = 
\begin{cases}
1 & \text{for } \quad z_a \leq v' \leq 1
\\
(v')^{-1/R} & \text{for } \quad 1 \leq v' \leq b^{-R}
\\
b & \text{for } \quad b^{-R} \leq v' \leq z_{b/r}
\end{cases}
\end{equation}
where $z_a$ and $z_{b/r}$ are constants to be determined. (The reason for this choice of notation will become clear when we change to dual variables later on.) Here, $z_a$ represents the value of $v'$ at which we decide to increase our maximum consumption, i.e. $z_a = v'(a)$ where $a$ comes from the assumed threshold-type behaviour. Similarly, $z_{b/r} = v'(b/r)$ where the ratio $b/r$ comes from the feasibility condition. 
\\
\\
Putting this all together gives
\[
\begin{array}{rcll}
0 & = & (1-R)v - xv' 				& \text{for } \quad  0 < v' \leq z_a \vspace{3mm} 
\\
0 & = & - \rho v + rxv' - \frac{1}{2} \kappa^2 \frac{(v')^2}{v''} + U(1) - v' & \text{for } \quad z_a \leq v' \leq 1
\vspace{3mm} 
\\
0 & = & - \rho v + rxv' - \frac{1}{2} \kappa^2 \frac{(v')^2}{v''} + \tilde{U}(v') & \text{for } \quad 1 \leq v' \leq b^{-R}
\vspace{3mm} 
\\
0 & = & - \rho v + rxv' - \frac{1}{2} \kappa^2 \frac{(v')^2}{v''} + U(b) - bv' & \text{for } \quad b^{-R} \leq v' \leq z_{b/r}
\end{array}
\]
where $\kappa = (\mu - r)/\sigma$ and $\tilde{U}$ is the dual function of $U$, i.e. 
\[
\tilde{U}(y) = \sup_{b \leq s \leq 1} \{ U(s) - y s  \} = -\frac{y^{1-R'}}{1-R'}
\]
where we define $R' = 1/R$.
\\
\\
It is perhaps clearer to see what is going on if we rewrite the boundaries in terms of $x \equiv w/\bar{c}$. To do this, let $x_z$ be the value of $x$ such that $v'(x) = z$. By definition of the value function \eqref{value function defn}, the function $v'$ is a decreasing function of $x$ so we can rewrite the above system of equations as:
\begin{equation}
\begin{array}{rcll}
0 & = & - \rho v + rxv' - \frac{1}{2} \kappa^2 \frac{(v')^2}{v''} + U(b) - bv' & \text{for } \quad b/r \leq x \leq x_{b^{-R}}
\vspace{3mm} 
\\
0 & = & - \rho v + rxv' - \frac{1}{2} \kappa^2 \frac{(v')^2}{v''} + \tilde{U}(v') & \text{for } \quad x_{b^{-R}} \leq x \leq x_1
\vspace{3mm} 
\\
0 & = & - \rho v + rxv' - \frac{1}{2} \kappa^2 \frac{(v')^2}{v''} + U(1) - v' & \text{for } \quad x_1 \leq x \leq a
\vspace{3mm} 
\\
0 & = & (1-R)v - xv' 				& \text{for } \quad  a \leq x < \infty \vspace{3mm} 
\\
\end{array}
\end{equation}
An intuitive explanation for what is happening in these regions is as follows. 
\\
\\
First consider $b/r \leq x \leq x_{b^{-R}}$, which is the region where $x$ is smallest. At $x = b/r$ we have just enough wealth to maintain the drawdown constraint if we put all our wealth in the bank account, and consume the interest. As $x$ increases until $x = x_{b^{-R}}$ we still consume at the minimum allowed level, $b\bar{c}$, but we have excess wealth which we invest in the risky stock. 
\\
\\
As $x$ increases, we enter the region $x_{b^{-R}} \leq x \leq x_1$. Here our consumption, $c$, increases with $x$ until $c = \bar{c}$ which is the point at which we enter the next region. 
\\
\\
For $x_1 \leq x \leq a$, we are consuming at $c=\bar{c}$ and we keep our consumption constant at this level until $x$ hits a certain critical value, $a$, to be determined.  
\\
\\
In the final region, $a \leq x < \infty$, $x$ is large, and the optimal action here is to immediately increase $\bar{c}$ until $x$ decreases to $a$ which brings us back to the previous region. This ensures that, under this strategy, the set of times spent outside the region $x \in [b/r, a]$ has zero Lebesgue measure. 
\\
\\
As suggested by the above reasoning, we have several boundary conditions at $x = b/r$. At this value of $x$, all our wealth needs to be in the bank account to generate enough interest to maintain the drawdown constraint. If we have non-zero wealth in the stock, the effect of the Brownian motion means that with positive probability, $x$ will fall below $b/r$ which would violate the condition $x \geq b/r$ which is a necessary condition for feasibility. Thus as $x \downarrow b/r$, we must have $v'(x)/v''(x) \rightarrow 0$ which would imply the amount of wealth in the risky stock goes to zero by the form of $y$ given in \eqref{y_optimal}. So our first boundary condition is
\begin{equation}
\frac{v'(x)}{v''(x)} \rightarrow 0 \quad \text{as } x \downarrow b/r.
\end{equation}
Now, if we ever hit $x = b/r$, all our wealth is in the bank account, and the interest generated by our wealth, $rw$, is exactly cancelled by our consumption at level $b \bar{c}$. Hence, our wealth and consumption remain constant, which gives the second boundary condition
\begin{equation} \label{b/r_boundary}
v(b/r) = U(b)/\rho.
\end{equation}
To solve this system of ordinary differential equations subject to the given boundary conditions we transform to dual variables
\begin{eqnarray} 
z & = & v'  \label{dualdefn1}
\\
J(z) & = & \sup_{x > b/r} \{ v(x) - xz \}. \label{dualdefn2}
\end{eqnarray}
Differentiating the above gives
\begin{eqnarray}
J'' & = & -1/v''
\\
J' & = & -x \label{recover x}
\end{eqnarray}
and the system of differential equations becomes 
\begin{equation}
\begin{array}{rcll}
0 & = & (1-R)J + RJ'z  				& \text{for } \quad  0 < z \leq z_a
\\
0 & = & - \rho J + (\rho - r)zJ' + \frac{1}{2} \kappa^2 z^2 J'' + U(1) - z & \text{for } \quad z_a \leq z \leq 1
\\
0 & = & - \rho J + (\rho - r)zJ' + \frac{1}{2} \kappa^2 z^2 J'' + \tilde{U}(z) & \text{for } \quad 1 \leq z \leq b^{-R}
\\
0 & = & - \rho J + (\rho - r)zJ' + \frac{1}{2} \kappa^2 z^2 J'' + U(b) - bz  & \text{for } \quad b^{-R} \leq z \leq z_{b/r}.
\end{array}
\end{equation}
We can also rewrite our two boundary conditions as follows. The first boundary condition becomes
\begin{equation}
zJ''(z) \rightarrow 0 \quad \text{as } z \rightarrow z_{b/r}.
\end{equation}
For the second boundary condition, we need to be more careful. If $z_{b/r} < \infty$, it becomes
\begin{equation}
J(z_{b/r}) = \frac{U(b)}{\rho} - \frac{b}{r} z_{b/r}.
\end{equation}
If, however, $z_{b/r} = \infty$ (which will turn out to be the case), we can rewrite the second boundary condition as
\begin{equation}
\left|J(z) - \frac{U(b)}{\rho} + \frac{b}{r} z \right| \rightarrow 0 \quad \text{as } z \rightarrow \infty. 
\end{equation}
In the first region, $0 < z \leq z_a$, we can solve for $J$ to obtain 
\[
J(z) = A z^{1-R'}
\]
for some constant $A$. 
Next, consider the last three regions. The homogeneous ODE 
\[
0 = - \rho J + (\rho - r)zJ' + \frac{1}{2} \kappa^2 z^2 J''
\] 
has complementary function 
\[
J_c(z) = B_1 z^{-\alpha} + B_2 z^{\beta}
\]
for constants $B_1$, $B_2$ and where $-\alpha < 0 < 1 < \beta$ are the roots of 
\[
Q(t) = \frac{1}{2}\kappa^2 t(t-1) + (\rho - r)t - \rho.
\]
By straightforward verification, we can check that the following are particular solutions for each region 
\[
J_p(z) = 
\begin{cases}
-\frac{1}{r} z + \frac{U(1)}{\rho} & \text{for } \quad z_a \leq z \leq 1 \vspace{3mm}
\\
 \frac{1}{\gamma_M} \tilde{U}(z) & \text{for } \quad 1 \leq z \leq b^{-R} \vspace{3mm}
\\
-\frac{b}{r}z + \frac{U(b)}{\rho} & \text{for } \quad b^{-R} \leq z \leq z_{b/r}.
\end{cases} 
\]
for $\gamma_M$ as defined in \eqref{gamma_M}. 
\\
\\
Thus the general solution for $J$ is 
\begin{equation}
J(z) = 
\begin{cases}
A z^{1-R'} & \text{for } \quad 0 \leq z \leq z_a \vspace{3mm} 
\\
B z^{-\alpha} + C z^{\beta} - \frac{1}{r} z + \frac{U(1)}{\rho} & \text{for } \quad z_a \leq z \leq 1 \vspace{3mm}
\\
D z^{-\alpha} + E z^{\beta}  - \frac{1}{\gamma_M} \frac{z^{1-R'}}{1-R'} & \text{for } \quad 1 \leq z \leq b^{-R} \vspace{3mm}
\\
F z^{-\alpha} + G z^{\beta}  -\frac{b}{r}z + \frac{U(b)}{\rho} & \text{for } \quad b^{-R} \leq z \leq z_{b/r}.
\end{cases} 
\end{equation}
for constants $A$, $B$, $C$, $D$, $E$, $F$, $G$, $z_a$ and $z_{b/r}$ to be determined. 
\\
\\
We guess that $z_{b/r} = \infty$. This makes intuitive sense because $z_{b/r} = v'(b/r)$. If $x$ ever hits $b/r$ then we are stuck at this level, and have no choice but to consume at the minimum allowed level from this point onwards. Thus, it makes sense that any deviation from this point would be significantly more preferable than remaining there, which would give $v'(b/r)= \infty$. Then, the boundary conditions at $z = z_{b/r}$ imply that
\begin{equation}
|F z^{-\alpha} + G z^\beta| \rightarrow 0 \quad \text{as } z \rightarrow \infty
\end{equation}
and 
\begin{equation}
|\alpha (\alpha +1) F z^{-\alpha-1} + \beta (\beta -1 )G z^{\beta-1}| \rightarrow 0 \quad \text{as } z \rightarrow \infty.
\end{equation}

The above boundary conditions, together with equality of the function, and its first and second derivatives, at $z_a$, $1$ and $b^{-R}$ (which is necessary because we are using It\^o's formula) allow us to determine all the constants as given below:

\beq
C & = & \frac{\left( b^{1+R(\beta-1)} - 1 \right)}{\beta(\alpha + \beta)} \left[ \frac{1}{R\gamma_M} (R(\alpha + 1) - 1) - \frac{\alpha + 1}{r} \right]
\eeq

$z_a$ is the solution between 0 and 1 of the equation
\begin{equation} \label{za}
0 = (\alpha + \beta) (R(\beta - 1) + 1) Cz_a^\beta - \frac{(\alpha + 1) z_a}{r} + \frac{\alpha}{\rho} 
\end{equation} 
\beq
A & = & \frac{z_a^{R'-1}}{\gamma_M} \left[ \frac{1}{1-R} - z_a \right]
\vspace{3mm}
\\
B & = & \frac{z_a^\alpha}{(\alpha+\beta)(R(\alpha+1)-1)} \left[ \frac{\beta}{\rho} + \frac{(1-\beta)z_a}{r} \right]
\vspace{3mm}
\\
D & = & B + \frac{1}{\alpha(\alpha+\beta)} \left[ \frac{\beta-1}{r} - \frac{1}{R\gamma_M} (1+R(\beta-1))  \right]
\vspace{3mm}
\\
E & = & \frac{b^{1+R(\beta-1)}}{\beta(\alpha+\beta)} \left[\frac{1}{R\gamma_M} (R(\alpha+1) - 1) - \frac{\alpha+1}{r} \right]
\vspace{3mm}
\\
F & = & B + \frac{\left(1 - b^{1-R(\alpha+1)} \right)}{\alpha(\alpha + \beta)} \left[ \frac{\beta-1}{r} - \frac{1}{R\gamma_M} (1 + R(\beta-1)) \right]
\vspace{3mm}
\\
G & = & 0
\eeq 
Thus, we have a function $J$ which is twice continuously differentiable on $0 < z < \infty$. Note that since we have an explicit form for $J$ we can recover the unknowns $a$, $x_{b^{-R}}$ and $x_1$ using \eqref{recover x} as given below: 
\begin{eqnarray} 
a & = & -J'(z_a) \label{a}
\\
x_{b^{-R}} & = & -J'(b^{-R}) \label{x_b-R}
\\
x_1 & = & -J'(1) \label{x_1}
\end{eqnarray} 
We can take the dual of $J$ to recover $v$ as follows
\[
v(x) = \inf_{0<z<\infty} \{ J(z) + xz \}.
\]
Unfortunately, for $0 <b<1$ it is not possible to obtain $v$ explicitly in all four regions, but we can obtain $v$ explicitly for two of the four regions:
\begin{equation} \label{v explicit Rnot1}
v(x) = 
\begin{cases}
\frac{ \left(x - \frac{b}{r} \right)^{1-R^*}}{1-R^*} (\alpha F)^{R^*} + \frac{ U(b)}{\rho} & \text{for } \quad b/r \leq x \leq x_{b^{-R}} \vspace{3mm} 
\\
U(x) \left[ \frac{1}{-A(1-R') }\right]^{-1/R'} & \text{for } \quad a \leq x < \infty
\end{cases} 
\end{equation}
For the inner two regions, $x_{b^{-R}} \leq x \leq x_1$ and $x_1 \leq x \leq a$ we have to obtain $v$ numerically. 
\\
\\
In the next section, we will show that for $R \neq 1$
\begin{equation} \label{value function}
V(w, \bar{c}) = \bar{c}^{1-R} v(w/\bar{c})
\end{equation}
is the value function for this problem and that the optimal controls are given by
\begin{equation} \label{theta_optimal}
\theta = - \frac{\mu - r}{\sigma^2} \frac{V_w}{V_{ww}} 
\end{equation}
and 
\begin{equation}  \label{c_optimal}
c = 
\begin{cases}
b\bar{c} & \text{for } \quad b/r \leq w/\bar{c} \leq x_{b^{-R}}
\\
(V_w)^{-1/R} & \text{for } \quad x_{b^{-R}} \leq w/\bar{c} \leq x_1
\\
\bar{c} & \text{for } \quad x_1 \leq w/\bar{c} \leq a
\\
w/a & \text{for } \quad a \leq w/\bar{c} < \infty.
\end{cases}
\end{equation} 

We illustrate the optimal strategy and the effect of the drawdown constraint via several figures. 
\\
\\
In Figure \ref{fig1}, we plot the dual function, $J$, and the scaled value function, $v$, as well as the optimal controls, $\theta$ and $c$, all against $x$. 
\\
\\
In Figure \ref{fig2}, we provide a simulation of the stock price followed by plots of $x$ and the optimal controls, all against time, based on this simulation. The horizontal dashed lines in Figure \ref{fig2b} represent the critical values $b/r$, $x_{b^{-R}}$, $x_1$, and $a$ which give the boundaries of the four different regions of behaviour. As $x$ moves between these different regions, we can see the effect on the optimal consumption rule in Figure \ref{fig2d}. In the simulation, consumption initially varies with $x$, then as $x$ increases, consumption is maintained at level $\bar{c}$. As $x$ increases further, $\bar{c}$ is occasionally raised to keep $x \leq a$. Finally as the stock price plummets, $x$ falls as well, so consumption drops until it hits $b \bar{c}$ and is maintained at that level so as not to violate the drawdown constraint. 
\\
\\
Figure \ref{fig3a} shows the scaled value function, $v$, as a function of $x$ for several values of $b$. We clearly see that $v$ decreases as $b$ increases, because increasing $b$ tightens the drawdown constraint, which in turn restricts the class of feasible strategies. Finally, Figure \ref{fig3b} plots $v(x)$ as a function of $b$ for several values of $x$. In this plot, we see once again how increasing $b$ decreases the value of $v(x)$, as one expects. 

\begin{figure}

	\begin{subfigure}{0.5\textwidth}
	\includegraphics[width=\textwidth]{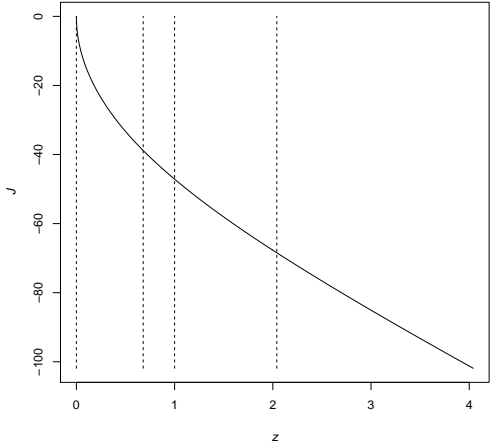}
	\caption{Dual function, $J$, against $z$}
	\end{subfigure}
	~
	\begin{subfigure}{0.5\textwidth}
	\includegraphics[width=\textwidth]{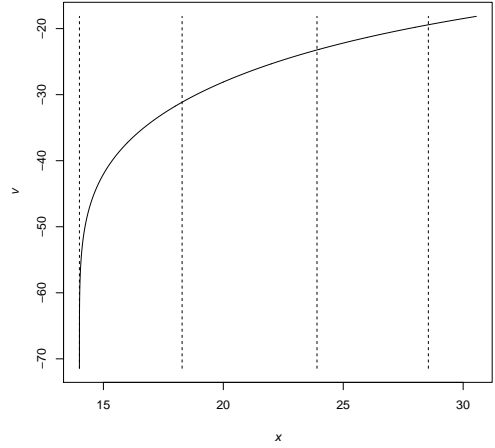}
	\caption{Scaled value function, $v$, against $x$}
	\end{subfigure} 
	
	\begin{subfigure}{0.5\textwidth}
	\includegraphics[width=\textwidth]{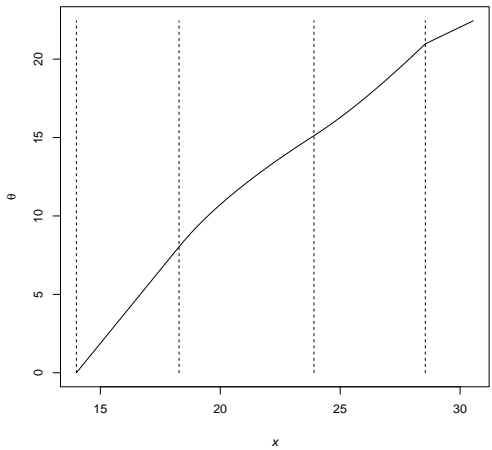}
	\caption{Wealth in stock, $\theta$, under the optimal control, against $x$}
	\end{subfigure}
	~
	\begin{subfigure}{0.5\textwidth}
	\includegraphics[width=\textwidth]{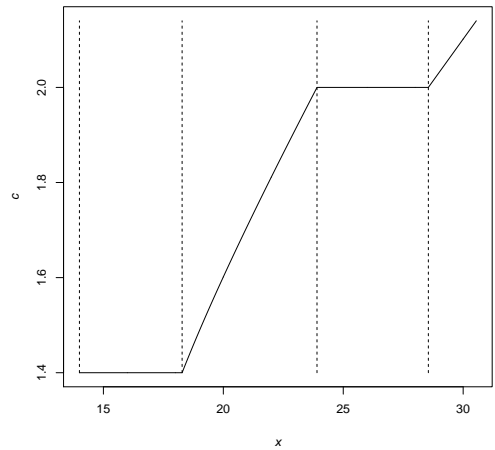}
	\caption{Optimal consumption strategy, $c$, against $x$}
	\end{subfigure} 
	
\caption{The vertical dashed lines represent the critical values $b/r$, $x_{b^{-R}}$, $x_1$, and $a$ which give the boundaries of the four different regions of behaviour. For both graphs we take $b=0.7$, $\bar{c}_0=2$, $R=2$, $\rho=0.02$, $r=0.05$, $\sigma=0.35$, and $\mu=0.14$. } \label{fig1}
\end{figure}

\begin{figure} 
	
	\begin{subfigure}{0.5\textwidth}
	\includegraphics[width=\textwidth]{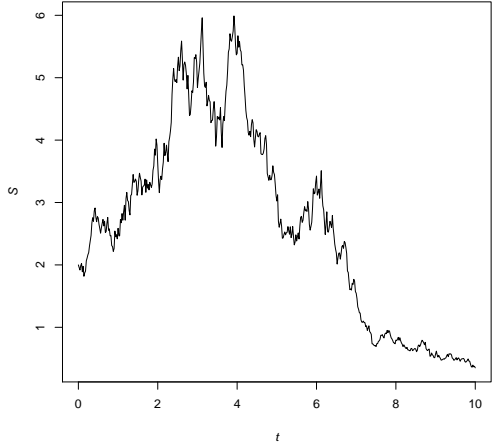}
	\caption{Stock price, $S$, against $t$}
	\end{subfigure}
	~
	\begin{subfigure}{0.5\textwidth}
	\includegraphics[width=\textwidth]{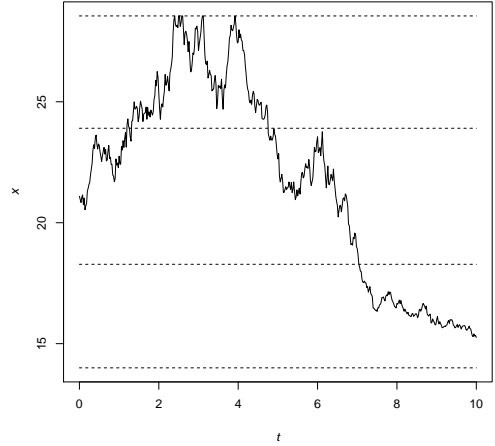}
	\caption{$x$ against $t$} \label{fig2b}
	\end{subfigure}
	
	\begin{subfigure}{0.5\textwidth}
	\includegraphics[width=\textwidth]{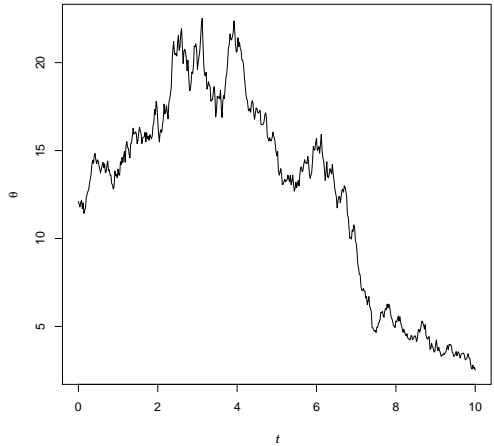}
	\caption{Wealth in stock, $\theta$, against $t$}
	\end{subfigure} 
	~
	\begin{subfigure}{0.5\textwidth}
	\includegraphics[width=\textwidth]{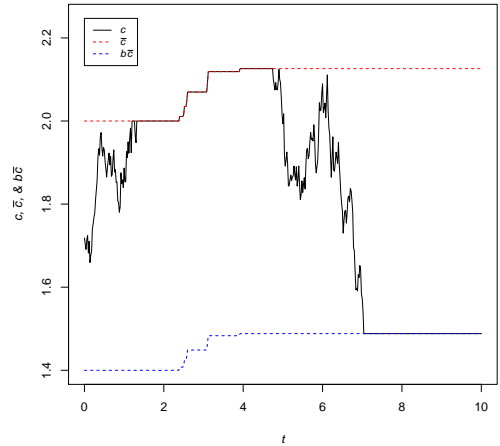}
	\caption{$c$, $\bar{c}$, \& $b\bar{c}$ against $t$} \label{fig2d}
	\end{subfigure} 
	
\caption{The above shows a simulation of the stock price and plots of $x$ and the optimal controls, $\theta$ and $c$, against time, $t$, based on this simulation. In Figure \ref{fig2b}, the horizontal dashed lines represent the critical values $b/r$, $x_{b^{-R}}$, $x_1$, and $a$ which give the boundaries of the four different regions of behaviour. For all three graphs we take $b=0.7$, $\bar{c}_0=2$, $R=2$, $\rho=0.02$, $r=0.05$, $\sigma=0.35$, and $\mu=0.14$. } \label{fig2}
\end{figure}

\begin{figure} 
	\begin{center}
	\begin{subfigure}{0.6\textwidth} 
	\includegraphics[width=\textwidth]{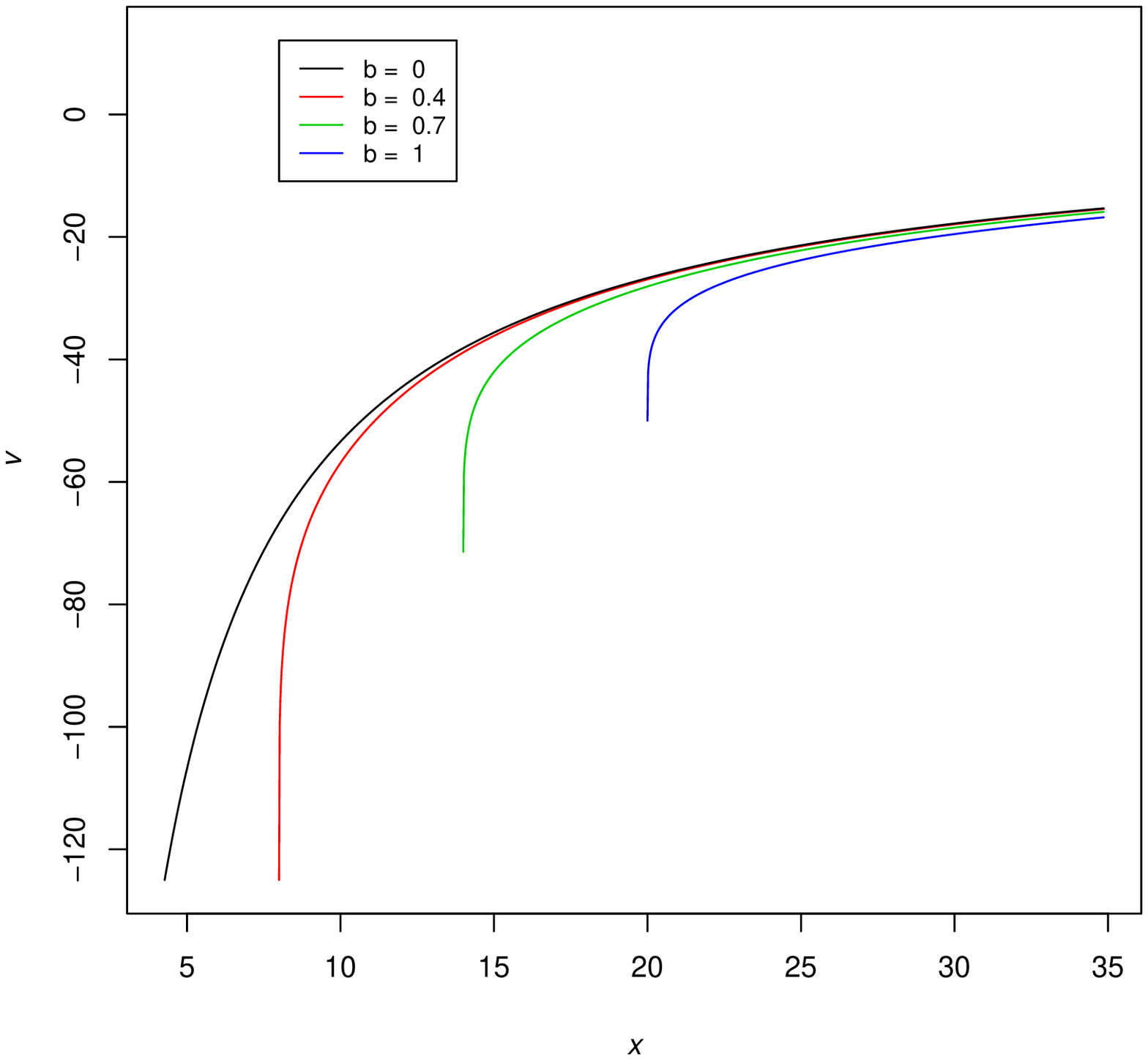}
	\caption{Scaled value function, $v$, against $x$ for several values of $b$} \label{fig3a}
	\end{subfigure}
	
	\begin{subfigure}{0.6\textwidth} 
	\includegraphics[width=\textwidth]{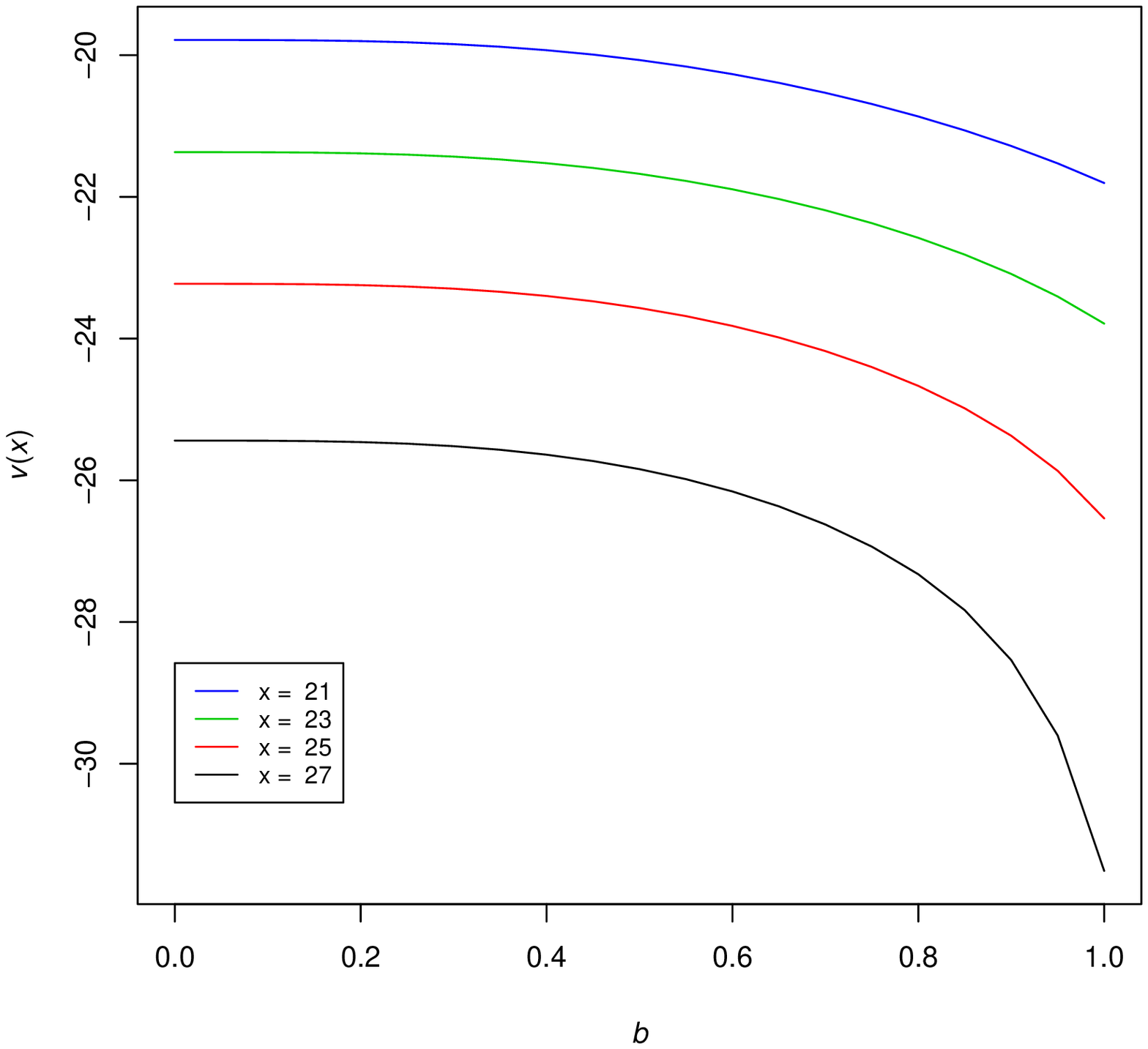}
	\caption{$v(x)$ against $b$ for several values of $x$} \label{fig3b}
	\end{subfigure} 
	\end{center}
			
\caption{In the above we take $\bar{c}_0=2$, $R=2$, $\rho=0.02$, $r=0.05$, $\sigma=0.35$, and $\mu=0.14$. }
\end{figure}

\section{Identifying the optimal controls and the value function for $R = 1$} \label{Ris1}

Now, we consider CRRA utility for $R=1$, that is we take our utility function to be $U(x) = \log x$, and solve the HJB equation in this case. The main difference is that we have a different scaling result. As before, let $\mathcal{A}(w, \bar{c})$ be the set of feasible strategies, $(\theta, c)$, starting from initial wealth, $w$, and initial maximum consumption, $\bar{c}$. 
\\
\\
Take $\lambda >0$. From the linearity of wealth dynamics we have that 
\[
(\theta, c) \in \mathcal{A}(\lambda w, \lambda \bar{c}) \quad \Leftrightarrow \quad (\tilde{\theta}, \tilde{c}) \in \mathcal{A}(w, \bar{c})
\] 
where $(\tilde{\theta}, \tilde{c}) = (\theta/\lambda, c/\lambda)$. Now observe that 
\beq
V(\lambda w, \lambda \bar{c}) & = & \sup_{\theta, c} \E \left[ \int_0^\infty e^{-\rho t} \log c_t dt \Big| w_0 = \lambda w, \bar{c}_0 = \lambda \bar{c} \right]
\\
& = & \sup_{\tilde{\theta}, \tilde{c}} \E \left[ \int_0^\infty e^{-\rho t} \log (\lambda \tilde{c}_t) dt \Big| w_0 = w, \bar{c}_0 = \bar{c} \right]
\\
& = & \frac{\log \lambda}{\rho} +  \sup_{\tilde{\theta}, \tilde{c}} \E \left[ \int_0^\infty e^{-\rho t}  \log \tilde{c}_t dt \Big| w_0 = w, \bar{c}_0 = \bar{c} \right]
\\
& = & \frac{\log \lambda}{\rho} +  V(w, \bar{c}).
\eeq
Thus taking $\lambda = 1/ \bar{c}$ implies that 
\begin{equation}
V(w, \bar{c}) = V\left(\frac{w}{\bar{c}}, 1 \right) + \frac{\log \bar{c}}{\rho} \equiv v(x) + \frac{\log \bar{c}}{\rho}
\end{equation}
where $x = w/\bar{c}$, and we take the above equation as the definition of $v$. We have
\begin{eqnarray}
V_w & = & \bar{c}^{\;-1} v'
\\
V_{ww} & = & \bar{c}^{\;-2} v''
\\
V_{\bar{c}} & = & \bar{c}^{\;-1} \left(\frac{1}{\rho} - xv' \right).
\end{eqnarray}
Now switch to dual variables, as we did before. Let
\begin{eqnarray}
z & = & v'
\\
J(z) & = & \sup_{x > b/r} \{ v(x) - xz \}.
\end{eqnarray}
Apart from this, the analysis is exactly the same as in the previous section, so we omit the details and present the final result. Our solution of the HJB equation for log utility is of the form
\begin{equation}
J(z) = 
\begin{cases}
-\frac{1}{\rho} \log z + A & \text{for } \quad 0 \leq z \leq z_a \vspace{3mm} 
\\
B z^{-\alpha} + C z^{\beta} - \frac{1}{r} z  & \text{for } \quad z_a \leq z \leq 1 \vspace{3mm}
\\
D z^{-\alpha} + E z^{\beta}  - \frac{1}{\rho} (\log z +1) - \frac{1}{\rho^2} (\rho - r - \frac{1}{2} \kappa^2) & \text{for } \quad 1 \leq z \leq 1/b \vspace{3mm}
\\
F z^{-\alpha} + G z^{\beta}  -\frac{b}{r}z + \frac{ \log b}{\rho} & \text{for } \quad 1/b \leq z \leq \infty.
\end{cases} 
\end{equation}
for constants $A$, $B$, $C$, $D$, $E$, $F$, $G$, and $z_a$ as given below.
\[
C = \frac{(b^\beta - 1)}{\beta(\alpha + \beta)} \left[ \frac{\alpha}{\rho} - \frac{\alpha+1}{r} \right]
\]
$z_a$ is the solution between 0 and 1 of the equation
\[
0 = \beta(\alpha+\beta)C z_a^\beta - \frac{(\alpha+1)z_a}{r} + \frac{\alpha}{\rho}
\]
\beq
A & = & \frac{1}{\alpha(\alpha+ \beta)} \left[\frac{\beta}{\rho} - (\beta-1) \frac{z_a}{r} \right] + \frac{1}{\beta(\alpha+\beta)} \left[ (\alpha+1)\frac{z_a}{r} - \frac{\alpha}{\rho} \right] - \frac{z_a}{r} + \frac{\log z_a}{\rho} \vspace{3mm}
\\
B & = & \frac{z_a^\alpha}{\alpha(\alpha+\beta)} \left[ \frac{\beta}{\rho} - ( \beta-1) \frac{z_a}{r} \right] \vspace{3mm}
\\
D & = & B + \frac{1}{\alpha(\alpha+ \beta)} \left[ \frac{\beta-1}{r} - \frac{\beta}{\rho} \right] \vspace{3mm}
\\
E & = & \frac{b^\beta}{\beta(\alpha+\beta)} \left[ \frac{\alpha}{\rho} - \frac{\alpha+1}{r} \right] \vspace{3mm}
\\
F & = & B + \frac{\left(b^{-\alpha}-1\right)}{\alpha(\alpha+\beta)} \left[ \frac{\beta}{\rho} - \frac{\beta-1}{r} \right] \vspace{3mm}
\\
G & = & 0. 
\eeq
And as in section \ref{Rnot1}, we can recover $a$, $x_{b^{-R}}$, and $x_1$ as given below: 
\begin{eqnarray} 
a & = & -J'(z_a) \label{1a}
\\
x_{b^{-R}} & = & -J'(1/b) \label{1x_b-R}
\\
x_1 & = & -J'(1) \label{1x_1}
\end{eqnarray} 
We can take the dual of $J$ to recover $v$ as follows
\[
v(x) = \inf_{0<z<\infty} \{ J(z) + xz \}.
\]
As in the $R \neq 1$ case, it is only possible to invert $J$ explicitly in two of the four regions, as given below. 
\begin{equation} \label{v explicit R=1}
v(x) = 
\begin{cases}
\frac{ \left(x - \frac{b}{r} \right)^{1-R^*}}{1-R^*} (\alpha F)^{R^*} + \frac{ \log b}{\rho} & \text{for } \quad b/r \leq x \leq x_{1/b} \vspace{3mm} 
\\
\frac{1}{\rho} \left( \log x + 1 + \log \rho \right) + A & \text{for } \quad a \leq x < \infty
\end{cases} 
\end{equation}
For the inner two regions, $x_{b^{-R}} \leq x \leq x_1$ and $x_1 \leq x \leq a$ we have to obtain $v$ numerically.
\\
\\
In the next section, we will show that for $R=1$
\begin{equation} \label{1 value function}
V(w, \bar{c}) = v(x) + \frac{\log \bar{c}}{\rho}
\end{equation}
is the value function for this problem and that the optimal controls are given by
\begin{equation} \label{1 theta_optimal}
\theta = - \frac{\mu - r}{\sigma^2} \frac{V_w}{V_{ww}} 
\end{equation}
and 
\begin{equation} \label{1 c_optimal}
c = 
\begin{cases} 
b\bar{c} & \text{for } \quad b/r \leq w/\bar{c} \leq x_{1/b}
\\
1/V_w & \text{for } \quad x_{1/b} \leq w/\bar{c} \leq x_1
\\
\bar{c} & \text{for } \quad x_1 \leq w/\bar{c} \leq a
\\
w/a & \text{for } \quad a \leq w/\bar{c} < \infty.
\end{cases}
\end{equation}

\section{Verification argument} \label{verification}
We modify the argument of Dybvig \cite{D} to prove optimality for our conjectured solution. First, we obtain necessary conditions for feasibility -- that is, we must have $rw_t \geq b\bar{c}_t$ almost surely and $r > 0$. In what follows, let $\E_\tau[\cdot] = \E[\cdot|\mathcal{F}_\tau]$, where $(\mathcal{F}_t)_{t \geq 0}$ represents the filtration generated by the stock price, $S$, or equivalently, by the Brownian motion, $W$. 
\\
\\
We will need the following lemma. 

\begin{lem}\ \\
For all feasible strategies, and for all $\tau \geq 0$,
\[
\E_\tau \left[ \int_{t=\tau}^\infty \frac{\zeta_t c_t}{\zeta_\tau} dt \right] \leq w_\tau
\]
almost surely, where $\zeta_t \equiv \exp (-rt - \frac12 \kappa^2 t - \kappa W_t)$ is the state-price density and where $\kappa = \frac{\mu-r}{\sigma}$, as defined previously.
\end{lem}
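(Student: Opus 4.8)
The plan is to show that the process
\[
M_t \;:=\; \zeta_t w_t + \int_0^t \zeta_s c_s\,ds
\]
is a non-negative supermartingale, and then to extract the claimed inequality by discarding a non-negative terminal term and letting the horizon tend to infinity. First I would compute the dynamics of $\zeta$: since $\zeta_t = \exp(-rt - \tfrac12\kappa^2 t - \kappa W_t)$, It\^o's formula gives $d\zeta_t = -\zeta_t(r\,dt + \kappa\,dW_t)$. Applying the product rule to $\zeta_t w_t$, using the wealth equation \eqref{wealth equation} and the cross-variation term $d\langle \zeta, w\rangle_t = -\zeta_t \kappa\sigma\theta_t\,dt$, the finite-variation part of $d(\zeta_t w_t)$ is $\zeta_t\big(rw_t + \theta_t(\mu-r) - c_t\big)\,dt - \zeta_t r w_t\,dt - \zeta_t\kappa\sigma\theta_t\,dt$. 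Because $\kappa\sigma = \mu - r$, all terms involving $\theta_t$ in the drift cancel, leaving $d(\zeta_t w_t) = -\zeta_t c_t\,dt + \zeta_t(\sigma\theta_t - \kappa w_t)\,dW_t$. Adding $\zeta_t c_t\,dt$ then shows $M_t = w_0 + \int_0^t \zeta_s(\sigma\theta_s - \kappa w_s)\,dW_s$, so $M$ is a local martingale.

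Next I would use admissibility: $w_t \geq 0$ a.s.\ for all $t$, and $\zeta_t > 0$, $c_t \geq 0$, so $M_t \geq 0$. A non-negative local martingale is a supermartingale, so for any $\tau \geq 0$ and any $T \geq \tau$ we get $\E_\tau[M_T] \leq M_\tau$. Since $\int_0^\tau \zeta_s c_s\,ds$ is $\mathcal{F}_\tau$-measurable, subtracting it from both sides and discarding the non-negative term $\E_\tau[\zeta_T w_T] \geq 0$ yields
\[
\E_\tau\!\left[\int_\tau^T \zeta_s c_s\,ds\right] \leq \zeta_\tau w_\tau .
\]
Letting $T \to \infty$ and applying the conditional monotone convergence theorem (the integrand is non-negative) gives $\E_\tau\big[\int_\tau^\infty \zeta_s c_s\,ds\big] \leq \zeta_\tau w_\tau$, and dividing through by the positive, $\mathcal{F}_\tau$-measurable random variable $\zeta_\tau$ gives the stated bound.

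The one step I would treat with care — the main (mild) obstacle — is the passage from ``local martingale'' to ``supermartingale'': I would fix an explicit localizing sequence of stopping times $T_n \uparrow \infty$, note that $\E_\tau[M_{T\wedge T_n}] = M_{\tau\wedge T_n}$, and pass to the limit using conditional Fatou together with $M \geq 0$; this simultaneously yields integrability of $M_T$ (bounded in expectation by $\E[M_0] = w_0 < \infty$). If the lemma is to be invoked with $\tau$ a stopping time rather than a deterministic time — as it will be in the feasibility corollary — one additionally uses that optional sampling holds unconditionally for non-negative supermartingales. The remaining manipulations (the It\^o computation, measurability of the time-$\tau$ quantities, monotone convergence) are entirely routine.
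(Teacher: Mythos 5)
Your proof is correct and follows essentially the same route as the paper: both establish that $\zeta_t w_t + \int \zeta_s c_s\,ds$ is a non-negative local martingale (via the same It\^o cancellation $\kappa\sigma = \mu - r$), hence a supermartingale, and then pass to the limit using non-negativity of the integrand (you via conditional monotone convergence, the paper via Fatou). The only cosmetic difference is that the paper normalises by $\zeta_\tau$ and starts the integral at $\tau$ from the outset, whereas you start at $0$ and divide by $\zeta_\tau$ at the end; your extra care over the localizing sequence and measurability is welcome but not a departure in substance.
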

\begin{proof}
$(\zeta_t)_{t \geq 0}$ is a strictly positive process and by It\^o's formula, 
\[
d\zeta_t = \zeta_t (-rdt - \kappa dW_t).
\]
Define for $t \geq \tau$
\[
Z_t = \frac{\zeta_t w_t}{\zeta_\tau} + \int_\tau^t \frac{\zeta_s c_s}{\zeta_\tau} ds.
\]
By It\^o's formula,
\beq
dZ_t & = & \frac{\zeta_tc_t}{\zeta_\tau}dt + \frac{\zeta_t}{\zeta_\tau} \left( w_trdt + \theta_t((\mu-r)dt + \sigma dW_t) - c_tdt \right) \\
& & +\: \frac{\zeta_t w_t}{\zeta_\tau} \left(-rdt - \kappa dW_t \right) + \frac{\zeta_t}{\zeta_\tau} \left(- \kappa \theta_t \sigma \right) dt 
\\
& = & \frac{\zeta_t}{\zeta_\tau} \left(\sigma \theta_t - \kappa w_t \right) dW_t.
\eeq
Hence, $Z$ is a positive local martingale which implies that $Z$ is a supermartingale. Finally, using Fatou's Lemma gives 
\beq
\E_\tau \left[ \int_{s=\tau}^\infty \frac{\zeta_s c_s}{\zeta_\tau}ds \right] & \leq & \lim_{t \rightarrow \infty} \E_\tau\left[ \int_{s = \tau}^t \frac{\zeta_s c_s}{\zeta_\tau} ds \right]
\\
& \leq & \lim_{t \rightarrow \infty} \E_\tau \left[ \int_{s=\tau}^t \frac{\zeta_s c_s}{\zeta_\tau} ds + \frac{\zeta_t w_t}{\zeta_\tau} \right]
\\
& = & \lim_{t \rightarrow \infty} \E_\tau [Z_t] 
\\
& \leq & Z_\tau \quad \text{since $Z$ is a supermartingale}
\\
& = & w_\tau.
\eeq
\end{proof}
Using this lemma, we obtain the following corollary which gives necessary conditions for feasibility. 
\begin{cor} \label{feasibility} \ \\
For the Merton problem with a drawdown constraint on consumption to have a solution, we require $r > 0$ and we must have $w_\tau \geq \frac{b \bar{c}_\tau}{r}$ almost surely for all $\tau \geq 0$. 
\end{cor}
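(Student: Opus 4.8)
The key input here is the Lemma just proved: for every feasible strategy and every $\tau \ge 0$,
\[
\E_\tau\!\left[\int_\tau^\infty \frac{\zeta_t c_t}{\zeta_\tau}\,dt\right] \le w_\tau \qquad \text{a.s.}
\]
The plan is to feed into this inequality the crude lower bound on consumption coming from the drawdown constraint. Since the running maximum $\bar c$ is non-decreasing, for every $t \ge \tau$ we have $c_t \ge b\bar c_t \ge b\bar c_\tau$, and $\bar c_\tau$ is $\mathcal F_\tau$-measurable, so
\[
w_\tau \;\ge\; \E_\tau\!\left[\int_\tau^\infty \frac{\zeta_t c_t}{\zeta_\tau}\,dt\right] \;\ge\; b\,\bar c_\tau \,\E_\tau\!\left[\int_\tau^\infty \frac{\zeta_t}{\zeta_\tau}\,dt\right] \qquad \text{a.s.}
\]

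Next I would evaluate the remaining conditional expectation explicitly. Writing $\zeta_t/\zeta_\tau = \exp\!\big(-(r+\tfrac12\kappa^2)(t-\tau)-\kappa(W_t-W_\tau)\big)$ and using that $W_t-W_\tau$ is independent of $\mathcal F_\tau$ with $\E\big[e^{-\kappa(W_t-W_\tau)}\big]=e^{\frac12\kappa^2(t-\tau)}$ gives $\E_\tau[\zeta_t/\zeta_\tau]=e^{-r(t-\tau)}$. Because the integrand is nonnegative, Tonelli's theorem justifies exchanging the conditional expectation with the $dt$-integral, so
\[
\E_\tau\!\left[\int_\tau^\infty \frac{\zeta_t}{\zeta_\tau}\,dt\right]=\int_\tau^\infty e^{-r(t-\tau)}\,dt=
\begin{cases} 1/r, & r>0,\\ +\infty, & r\le 0.\end{cases}
\]

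Finally I would conclude by cases. Any feasible strategy satisfies $\bar c_\tau \ge \bar c_0 >0$, while admissibility forces $w_\tau<\infty$ almost surely; hence if $r\le 0$ the displayed inequality would read $w_\tau \ge b\bar c_\tau\cdot\infty=\infty$, a contradiction. Therefore no feasible strategy can exist unless $r>0$, and in particular the problem can only have a solution when $r>0$. Granting $r>0$, the same inequality becomes $w_\tau \ge b\bar c_\tau/r$ almost surely, which is exactly the assertion. The only points requiring care are the Tonelli interchange (immediate from nonnegativity) and verifying that $\bar c_\tau$ is bounded away from zero so that the $r\le 0$ branch genuinely produces a contradiction; there is no substantial obstacle beyond this bookkeeping, since the Lemma has done the real work.
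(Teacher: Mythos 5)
Your proposal is correct and follows essentially the same route as the paper: apply the Lemma, lower-bound $c_t$ by the $\mathcal F_\tau$-measurable quantity $b\bar c_\tau$, use Tonelli to reduce to $\E_\tau[\zeta_t/\zeta_\tau]=e^{-r(t-\tau)}$, and integrate to get $b\bar c_\tau/r$ when $r>0$ and $+\infty$ otherwise. Your extra remark that $\bar c_\tau>0$ and $w_\tau<\infty$ are needed to turn the $r\le 0$ branch into a genuine contradiction is a small but welcome tightening of the paper's (implicit) argument.
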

\begin{proof}
Fix $\tau \geq 0$. From the previous lemma and the drawdown constraint we have 
\beq
w_\tau & \geq & \E_\tau\left[ \int_{t=\tau}^\infty \frac{\zeta_t c_t}{\zeta_\tau} dt \right]
\\
& \geq & \E_\tau \left[ \int_{t = \tau}^\infty \frac{\zeta_t b \bar{c}_\tau}{\zeta_\tau} dt \right]
\\
& = & \int_{t = \tau}^\infty \E_\tau \left[ \exp \left[ \left(-r-\frac12 \kappa^2 \right) (t -\tau) - \kappa (W_t - W_\tau) \right] b\bar{c}_\tau dt \right]
\\
& = & \int_{t=\tau}^\infty \exp(-r(t-\tau)) b \bar{c}_\tau dt
\\
& = & 
\begin{cases}
\frac{b \bar{c}_\tau}{r} & \text{ for } r > 0
\\
+\infty & \text{otherwise}
\end{cases}
\eeq
where the exchange of the order of integration and expectation is valid because the integrand is non-negative.
\end{proof}

This makes precise the intuitive argument given earlier about why we can restrict our attention to the feasible region $\{(w,\bar{c}): w/\bar{c} \geq b/r \}$. For our conjectured optimal controls (\eqref{theta_optimal}, \eqref{c_optimal} or \eqref{1 theta_optimal}, \eqref{1 c_optimal}), we can further restrict our attention to the region $\{(w,\bar{c}): b/r \leq w/\bar{c} \leq a \}$, because our consumption rule (\eqref{c_optimal} or \eqref{1 c_optimal}) implies that the set of times spent outside this region has zero Lebesgue measure.  
\\
\\
We are now ready to state our verification theorem. Note that we take $\bar{c}_0 > 0$. As mentioned in Dybvig \cite{D}, we could take $\bar{c}_0=0$ without too much difficulty. However, this would only yield a slight increase in generality but would require dealing with many extra cases. 

\begin{thm}\label{verify}\ \\
Given fixed initial conditions $(w_0, \bar{c}_0)$ with $w_0/\bar{c}_0 \geq b/r$ and $\bar{c}_0 > 0$, the Merton problem with a drawdown constraint on consumption has value function $V(w,\bar{c})$ as defined in \eqref{value function} or \eqref{1 value function}. The optimal controls are 
\[
\theta_t = -\frac{\mu-r}{\sigma^2} \frac{V_w}{V_{ww}} 
\]
and 
\[
c_t = 
\begin{cases}
b\bar{c}_t & \text{for } \quad b/r \leq w_t/\bar{c}_t \leq x_{b^{-R}}
\\
(V_w)^{-1/R} & \text{for } \quad x_{b^{-R}} \leq w_t/\bar{c}_t \leq x_1
\\
\bar{c}_t & \text{for } \quad x_1 \leq w_t/\bar{c}_t \leq a
\\
w_t/a & \text{for } \quad a \leq w_t/\bar{c}_t < \infty.
\end{cases}
\]
for constants $a$, $x_{b^{-R}}$ and $x_1$ as defined in \eqref{a}, \eqref{x_b-R}, \eqref{x_1}, or \eqref{1a}, \eqref{1x_b-R}, \eqref{1x_1}.
\end{thm}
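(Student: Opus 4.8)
The strategy is the standard two-sided argument familiar from verification theorems (and in particular from Dybvig's treatment of the $b=1$ case): first show $V(w_0,\bar c_0)$ is an upper bound for the objective over all feasible admissible strategies, then exhibit the conjectured controls and show they attain it. For the upper bound, I would take an arbitrary feasible admissible strategy $(\theta,c)$ and consider the process $Y_t = e^{-\rho t} V(w_t,\bar c_t) + \int_0^t e^{-\rho s} U(c_s)\,ds$. Applying It\^o's formula (using that $V$ is $C^{2,1}$ away from the boundaries, which follows from the $C^2$ property of $J$ and hence of $v$ established in Section \ref{Rnot1}), the drift of $e^{\rho t}Y_t$ is $V_{\bar c}\,d\bar c_t$ plus the term inside the supremum of the HJB equation; by construction of $v$ as the solution of the HJB equation, both contributions are $\leq 0$, so $Y$ is a local supermartingale. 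The local martingale part is $e^{-\rho t}V_w\theta_t\sigma\,dW_t$; since $Y$ is bounded below (one needs $V\geq 0$, or more carefully a lower bound coming from the feasibility region $w/\bar c\ge b/r$ together with the explicit form \eqref{v explicit Rnot1}), a Fatou argument gives $\E\left[\int_0^\infty e^{-\rho t}U(c_t)\,dt\right] \le \E[Y_\infty] \le Y_0 = V(w_0,\bar c_0)$, provided $e^{-\rho t}\E[V(w_t,\bar c_t)]\to 0$ along a subsequence. For $R>1$ the utility is negative and this transversality step needs a genuine estimate; the cleanest route is to bound $V(w_t,\bar c_t)$ in terms of $w_t$ and $\bar c_t$ using the explicit asymptotics of $v$ (linear-type growth near $b/r$, power growth $x^{1-R}$ for large $x$) and then control $\E[e^{-\rho t} w_t^{1-R}]$ and $\E[e^{-\rho t}\bar c_t^{\,1-R}]$ via the Lemma and the state-price density; this is where $\gamma_M>0$ (equivalently $R>R^*$) enters decisively.

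For the lower bound, I would plug in the conjectured controls \eqref{theta_optimal}, \eqref{c_optimal}. First one must check admissibility: under these controls the ratio $X_t = w_t/\bar c_t$ is a diffusion reflected (in an appropriate sense) at $a$ — whenever $X$ reaches $a$, $\bar c$ is increased just enough to keep $X\le a$, so $\bar c$ is flat except on the zero-Lebesgue-measure set $\{X=a\}$, and at the other end $X$ never drops below $b/r$ because the boundary condition $v'/v''\to 0$ forces $\theta\to 0$ there (Corollary \ref{feasibility} guarantees this is the correct absorbing/reflecting behaviour and that $w_t\ge 0$). Feasibility of $c$ is immediate from the case split. With these controls, each inequality used above becomes an equality: $V_{\bar c}=0$ on $\{X=a\}$ where $\bar c$ actually moves, and the HJB supremum is attained pointwise by the very definitions of $\theta$ and $c$ in each of the four regions. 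Hence $Y$ is a true local martingale under the optimal controls; upgrading to a genuine martingale (again a uniform-integrability/transversality check, now an equality $e^{-\rho t}\E[V(w_t,\bar c_t)]\to 0$) yields $\E\left[\int_0^\infty e^{-\rho t}U(c_t)\,dt\right] = V(w_0,\bar c_0)$.

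A couple of technical points deserve care and I would handle them explicitly. The function $V$ is only $C^{2,1}$ piecewise — across $x=x_{b^{-R}}$, $x=x_1$, $x=a$ the construction was arranged so that $v,v',v''$ match, so $V\in C^{2,1}$ globally on the open feasible region and It\^o's formula applies without local-time corrections; I would state this as a consequence of the smooth-fit conditions imposed in Section \ref{Rnot1}. The boundary $x=b/r$ is more delicate: the process started with $w_0/\bar c_0 > b/r$ may approach it, and one must argue either that it is never hit, or that if it is hit then $w,\bar c$ and $c$ stay constant thereafter (consistent with \eqref{b/r_boundary}) so the argument still closes; this follows from $\theta\to 0$ and $c=b\bar c$ there. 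The $R=1$ (log) case is handled identically with the obvious modifications ($V(w,\bar c)=v(x)+\rho^{-1}\log\bar c$, so $V_{\bar c}=\bar c^{-1}(\rho^{-1}-xv')$ and the analogous transversality estimate), so I would prove the $R\neq1$ case in full and indicate the changes for $R=1$.

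The main obstacle is the transversality / uniform-integrability step, i.e. showing $e^{-\rho t}\E[V(w_t,\bar c_t)]\to 0$ for the optimal (and, for the upper bound, for every feasible admissible) strategy. This is exactly where the well-posedness condition $\gamma_M>0$ must be used, and it requires a quantitative growth bound on $V$ combined with moment estimates for $w_t$ and $\bar c_t$ under the discount factor; everything else is bookkeeping with It\^o's formula and the HJB equation.
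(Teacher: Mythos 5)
Your proposal follows essentially the same route as the paper: the martingale principle applied to $Y_t = e^{-\rho t}V + \int_0^t e^{-\rho s}U(c_s)\,ds$, the local-supermartingale/local-martingale dichotomy from the HJB equation, the lower bound on $V$ from the boundary condition at $x=b/r$ to upgrade to a true supermartingale, and the transversality estimate $\E[e^{-\rho t}V(w_t,\bar c_t)]\to 0$ via growth bounds on $v$ and moment estimates on $w_t$ using $\gamma_M>0$. You correctly identify the transversality/uniform-integrability step as the crux, which is exactly where the paper expends its effort (Lemmas \ref{wealth}--\ref{residue}).
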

We will prove this via a series of lemmas. We will need the following definition.

\begin{defn}
We say that a process $X$ is a \emph{local supermartingale} if there exists a sequence of stopping times $\tau_n$ with $\tau_n \uparrow \infty$ almost surely, such that for each $n \geq 0$ we have that $(X_{t \wedge \tau_n})_{t \geq 0}$ is a supermartingale. 
\end{defn}

\begin{rem}
Clearly, if $X_t = M_t + A_t$ for $M$ a local martingale and $A$ a non-increasing process, then $X$ is a local supermartingale. 
\end{rem} 

\begin{lem}\label{local}\ \\
Let 
\[
Y_t = \int_0^t e^{-\rho s} U(c_s) ds + e^{-\rho t} V(w_t, \bar{c}_t).
\]
Then for any feasible strategy, $(\theta, c)$, $Y$ is a local supermartingale and for the proposed optimal control, $Y$ is a local martingale. 
\end{lem}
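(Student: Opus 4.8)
The plan is to apply It\^o's formula to $Y_t$ and show that the drift term is non-positive for any feasible strategy and identically zero for the proposed optimal control, while the $d\bar{c}_t$ term contributes a non-increasing process. First I would recall from the derivation that, writing $V=V(w_t,\bar c_t)$, It\^o's formula gives
\[
e^{\rho t}\,dY_t = \left[-\rho V + V_w\bigl(rw_t + \theta_t(\mu-r) - c_t\bigr) + \tfrac12\sigma^2\theta_t^2 V_{ww} + U(c_t)\right]dt + V_{\bar c}\,d\bar c_t + V_w\theta_t\sigma\,dW_t.
\]
Since $\bar c_t$ is non-decreasing and we have established $V_{\bar c}\le 0$ (this holds because $V(w,\bar c)=\bar c^{1-R}v(w/\bar c)$ with $v$ satisfying $(1-R)v - xv'\le 0$, equivalently $V_{\bar c}=\bar c^{-R}\{(1-R)v-xv'\}\le0$; and the $R=1$ analogue $\tfrac1\rho - xv'\le0$), the term $V_{\bar c}\,d\bar c_t$ defines a non-increasing process. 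The bracketed $dt$ term is bounded above by $0$ for every feasible $(\theta,c)$ because $V$ (equivalently $v$, equivalently $J$) solves the HJB equation: by construction $J$ is $C^2$ on $(0,\infty)$ and satisfies each of the four regional ODEs, which is exactly the statement that the supremum over $(\theta,c)$ of the bracket equals zero. Hence $e^{\rho t}dY_t = (\text{non-positive drift})\,dt + (\text{non-increasing})\,d\bar c_t + (\text{martingale part})$, so $Y$ is a local supermartingale via the Remark, with the localizing sequence $\tau_n = \inf\{t: w_t + |W_t| + \int_0^t e^{-\rho s}|U(c_s)|ds \ge n\}\wedge n$ (or similar) chosen to control the stochastic integral and make $Y_{t\wedge\tau_n}$ a genuine supermartingale.

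For the proposed optimal control I would verify that all three inequalities become equalities. The consumption rule in \eqref{c_optimal}/\eqref{1 c_optimal} is precisely the maximiser $s$ of $U(s)-v's$ over $b\le s\le1$ rescaled, so it attains the supremum in the HJB bracket in each region; the portfolio rule $\theta_t = -\tfrac{\mu-r}{\sigma^2}V_w/V_{ww}$ is exactly \eqref{y_optimal} rescaled, attaining the quadratic maximum over $\theta$. Thus the $dt$ drift vanishes pointwise. For the $V_{\bar c}\,d\bar c_t$ term, I would note that under the optimal control $\bar c_t$ increases only when $w_t/\bar c_t = a$, i.e. only in the region $x\ge a$ where the first branch of the HJB equation is active and $V_{\bar c}=0$ (equivalently $(1-R)v(a)-av'(a)=0$); the Skorokhod-type reflection keeps $x$ in $[b/r,a]$ and $d\bar c_t$ is supported on $\{x=a\}$. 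So on the support of $d\bar c_t$ the coefficient $V_{\bar c}$ vanishes, killing that term. Therefore $e^{\rho t}dY_t$ reduces to the pure stochastic integral $V_w\theta_t\sigma\,dW_t$, and $Y$ is a local martingale under the same localizing sequence.

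I would then spell out the one genuinely fiddly point: verifying that $V\in C^{2,1}$ across the region boundaries $x=b/r$, $x=x_{b^{-R}}$, $x=x_1$, $x=a$ so that It\^o's formula is legitimately applicable. This follows from the construction of $J$: the constants $A,B,C,D,E,F,G,z_a$ were pinned down precisely by demanding equality of $J,J',J''$ at $z_a,1,b^{-R}$, and the duality $v(x)=\inf_z\{J(z)+xz\}$ with $x=-J'(z)$ transfers $C^2$-regularity of $J$ to $C^2$-regularity of $v$ on $(b/r,\infty)$ (since $J''>0$ on the relevant range makes the Legendre transform a genuine $C^2$ diffeomorphism; at $x=a$ one must check $v'$ and $v''$ match the values coming from the outer explicit branch \eqref{v explicit Rnot1}). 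At $x=b/r$ only one-sided regularity is needed since the process never crosses below $b/r$. I expect this regularity bookkeeping, together with the careful choice of the localizing sequence so that each stopped process is a true (super)martingale rather than merely a local one, to be the main obstacle; the HJB-inequality verification itself is essentially a restatement of how $J$ was derived and should be routine.
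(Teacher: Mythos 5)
Your route is the same as the paper's: apply It\^o's formula to $Y$, use the HJB equation to make the $dt$-drift non-positive (zero under the candidate control), use $V_{\bar c}\le 0$ together with the fact that $d\bar c$ is carried by $\{x=a\}$ where $V_{\bar c}=0$, and invoke the local-martingale-plus-non-increasing decomposition. Your discussion of $C^2$-regularity across the gluing points and of the support of $d\bar c_t$ is, if anything, more explicit than the paper's proof, which is deliberately terse.

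One point in your write-up is logically off, and it is precisely the point the paper isolates as ``the things left to verify.'' You assert that $J$ satisfying the four regional ODEs ``is exactly the statement that the supremum over $(\theta,c)$ of the bracket equals zero.'' It is not: the ODEs only record that the bracket vanishes at the \emph{candidate} maximiser in each region. To upgrade this to $\sup_{\theta,\,c\ge b\bar c}[\cdot]\le 0$ for \emph{arbitrary} feasible controls you additionally need $V_{ww}<0$ (equivalently $J''>0$) so that the quadratic in $\theta$ is genuinely maximised at the interior critical point rather than having supremum $+\infty$, and $V_w\ge 0$ so that the constrained consumption maximisation has the stated three-piece form. Likewise you treat $V_{\bar c}\le 0$ (equivalently $(1-R)J+RJ'z\le 0$) as already established, whereas off the ratchet boundary this is an inequality that must be checked against the constructed $J$ -- it is not forced by the construction, which only enforces equality of one branch of the HJB maximum in each region while the other branch must separately be shown to be $\le 0$ (including the sup-branch in the region $x\ge a$, which a general feasible strategy can occupy for positive time). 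The paper states these three sign conditions explicitly and then omits the (admittedly tedious) computation; your proposal would be complete once you acknowledge that these are genuine additional verifications rather than restatements of the construction.
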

\begin{proof}
This is essentially true by construction because we chose $V$ to be the solution of the HJB equation. However, there are a few things left to verify. We need to check that $V_{\bar{c}} \leq 0$, $V_w \geq 0$ and $V_{ww} < 0$ to ensure that the drift term in the It\^o expansion of $Y$ is non-positive for all feasible strategies and is identically zero for the conjectured optimal control. By the definition of $J$ in \eqref{dualdefn2}, it is sufficient to show that $(1-R)J+RJ'z \leq 0$, $J' \leq 0$ and $J''>0$. This is a straightforward but surprisingly tedious exercise and we omit the details. 
\end{proof}

The next step is to strengthen the conclusion of the above lemma from local (super)martingale to (super)martingale. To do this, we first need to prove a result about the wealth process, $(w_t)_{t \geq 0}$. 

\begin{lem} \label{wealth}\ \\
Fix $\bar{c}_0 >0$ and $p \neq 0$. Given any feasible strategy, $(\theta, c)$, we have
\begin{eqnarray} 
w_t^p & = & w_0^p \exp \left(\int_{s=0}^t p\left(r + \frac{\theta_s}{w_s} (\mu -r) - \frac{c_s}{w_s} + \frac12 (p-1) \left(\frac{\theta_s}{w_s} \right)^2 \sigma^2 \right) ds \right) \nonumber
\\
& & \times\: \exp \left( \int_{s=0}^t p\frac{\theta_s}{w_s} \sigma dW_s - \frac12 \int_{s=0}^t p^2 \left(\frac{\theta_s}{w_s} \right)^2 \sigma^2 ds \right) \label{2}
\end{eqnarray}
where the second exponential term is a stochastic exponential (or Dol\'eans-Dade exponential) which is a non-negative local martingale thus a supermartingale. For the proposed optimal control, this stochastic exponential is, in fact, a true martingale. Furthermore, for the optimal control, there exists a constant $\tilde{b}$ depending on $p$ and the parameters of the problem such that 
\begin{equation} \label{3}
\E\left[ w_t^p \right] \leq w_0^p \exp(\tilde{b}t). 
\end{equation}
\end{lem}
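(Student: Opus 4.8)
The plan is to prove the two parts of Lemma \ref{wealth} in sequence, the Doléans-Dade identity first and the exponential bound \eqref{3} second.

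\textbf{Step 1: the stochastic exponential representation.} Starting from the wealth equation \eqref{wealth equation}, write $\pi_s = \theta_s/w_s$ for the fraction of wealth in the stock (well-defined up to the first time $w$ hits $0$; on the optimal control, Corollary \ref{feasibility} keeps $w$ bounded away from $0$ relative to $\bar c$, so $\pi$ is finite). Then $dw_t = w_t\bigl[(r + \pi_t(\mu-r) - c_t/w_t)\,dt + \pi_t\sigma\,dW_t\bigr]$, which is a linear SDE. Applying It\^o's formula to $\log w_t$ and then to $w_t^p = \exp(p\log w_t)$ gives exactly \eqref{2}: the first exponential collects the finite-variation terms $p(r + \pi_s(\mu-r) - c_s/w_s + \tfrac12(p-1)\pi_s^2\sigma^2)$, and the second is $\exp(\int_0^t p\pi_s\sigma\,dW_s - \tfrac12\int_0^t p^2\pi_s^2\sigma^2\,ds)$, the Dol\'eans-Dade exponential of the martingale $p\pi\cdot W$. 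Being a non-negative local martingale it is a supermartingale. For the \emph{optimal} control, $\pi_s = \theta_s/w_s = -\frac{\mu-r}{\sigma^2}\frac{V_w}{w_sV_{ww}}$; using the scaling $V(w,\bar c)=\bar c^{1-R}v(w/\bar c)$ one has $V_w/(wV_{ww}) = v'(x)/(xv''(x))$ with $x=w/\bar c$, and since the optimal $x$ stays in the compact interval $[b/r,a]$ and $J''>0$, $J'\le 0$ (Lemma \ref{local}), the quantity $v'/(xv'')$ is \emph{bounded} on that interval. Hence $\pi$ is bounded for the optimal control, so $\int_0^t p^2\pi_s^2\sigma^2\,ds \le Kt$ and Novikov's criterion makes the stochastic exponential a true martingale.

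\textbf{Step 2: the bound \eqref{3}.} Take expectations in \eqref{2} for the optimal control. Since the stochastic exponential is a true martingale of mean $1$, it would suffice to bound the first (finite-variation) exponential \emph{pathwise}; but $c_s/w_s$ enters with sign $-p$, so for $p<0$ this term is problematic on its face. The right move is to separate cases by the sign of $p$, or better, to bound the drift exponent using the boundedness of $\pi_s$ together with a bound on $c_s/w_s$. On the optimal control $c_s/w_s$ is also controlled: in each of the four regions $c$ equals $b\bar c$, $(V_w)^{-1/R}$, $\bar c$, or $w/a$, and dividing by $w$ while using $x=w/\bar c\in[b/r,a]$ shows $c_s/w_s \le \bar c/w_s \le r/b$ (and $c_s/w_s \ge 1/a$ in the last region, $\ge$ something positive generally). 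Thus the entire integrand $p(r+\pi_s(\mu-r)-c_s/w_s+\tfrac12(p-1)\pi_s^2\sigma^2)$ is bounded above by a constant $\tilde b = \tilde b(p)$ depending only on $p$ and the model parameters, uniformly in $s$ and $\omega$. Therefore the first exponential is $\le e^{\tilde b t}$ pathwise, and taking expectations and using $\E[\text{stoch.\ exp.}_t]=1$ (Step 1) yields $\E[w_t^p] \le w_0^p e^{\tilde b t}$.

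\textbf{Main obstacle.} The only real work is establishing the \emph{uniform boundedness} of $\pi_s = \theta_s/w_s$ and of $c_s/w_s$ on the optimal control — everything else is It\^o's formula and Novikov. This reduces to showing $v'/(xv'')$ and the consumption-to-wealth ratio are bounded on $x\in[b/r,a]$, which in turn rests on the regularity and sign properties of $J$ ($J''>0$, $J'\le 0$, $J\in C^2$) established in Lemma \ref{local} and the explicit piecewise form of $J$; near the endpoint $x=b/r$ one must check the boundary behaviour $v'/v''\to 0$ does not spoil things (it only helps, forcing $\pi\to 0$), and near $x=a$ the relevant quantities are continuous hence bounded. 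Handling the sign of $p$ correctly in Step 2 (so that the $-p\,c_s/w_s$ term is bounded \emph{above} rather than needing a lower bound on $c/w$) is a minor bookkeeping point rather than a genuine difficulty.
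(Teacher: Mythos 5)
Your proposal is correct and follows essentially the same route as the paper: It\^o's formula applied to $\log w_t$ (valid because feasibility forces $w_t \geq b\bar{c}_0/r > 0$) to obtain the representation \eqref{2}, Novikov's criterion via the boundedness of $\theta_s/w_s$ on the optimal control (using $x_s \in [b/r,a]$ together with the continuity of $zJ''$ and its decay $\alpha(\alpha+1)Fz^{-\alpha-1} \to 0$ as $z \to \infty$, i.e.\ as $x \downarrow b/r$), and finally the bound \eqref{3} from the boundedness of $c_s/w_s = (c_s/\bar{c}_s)(\bar{c}_s/w_s)$ with $c_s/\bar{c}_s \in [b,1]$ and $\bar{c}_s/w_s \in [1/a, r/b]$. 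The sign-of-$p$ point you raise is resolved exactly as you suggest: since $c_s/w_s$ is bounded both above and below by positive constants, the full drift integrand multiplied by $p$ admits an upper bound $\tilde{b}$ for either sign of $p$, which is all the paper uses.
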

\begin{proof}
It\^o's formula tells us that $d \log (w^p) = \frac{p}{w} dw - \frac12 \frac{p}{w^2} d\langle w \rangle$. Note that It\^o's formula is valid because the logarithm and power functions are smooth over the relevant domain since feasibility implies that $w_t \geq b \bar{c}_0/r >0$ because $\bar{c}_0 >0$ by assumption. Integrating the expression for $d\log (w^p)$ and substituting in the wealth equation \eqref{wealth equation} gives the form for $w_t^p$ in \eqref{2}. 
\\
\\
An application of It\^o's formula gives that the second term in \eqref{2} is a local martingale. Since it is clearly non-negative, it is a supermartingale. To show that this is a martingale for the conjectured optimal control, it is sufficient to show that $\theta_s/w_s$ takes values in a compact set of the form $[0, M]$ for some constant $M > 0$. This would imply that Novikov's criterion (page 198, \cite{KS}) is satisfied which would imply that it is a martingale. We have that $\frac{\theta_s}{w_s} = \frac{\theta_s}{\bar{c}_s} \times \frac{\bar{c}_s}{w_s}$. As mentioned before, for the conjectured optimal control, $x_s \equiv w_s/\bar{c}_s$ takes values in the compact set $[b/r, a]$ which implies that  $\bar{c}_s/w_s \in [1/a, r/b]$. Now, to deal with the $\theta_s/\bar{c}_s$ term, first we will show that for the conjectured optimal control, it is a continuous function of $x$ for $x \in [b/r, a]$. For $R \neq 1$ and $R=1$, we have by \eqref{y_optimal}, \eqref{dualdefn1} and \eqref{dualdefn2} that
\[
\frac{\theta_s}{\bar{c}_s} = \frac{\mu - r}{\sigma^2} zJ''.
\]
This is continuous for $x \in (b/r, a]$, or equivalently $z \in [z_a, \infty)$, since by construction, $J''$ is continuous in this region. So the only thing to check is continuity at $x = b/r$. At this critical value of $x$, we set $\theta = 0$ and place all our wealth in the bank account, so we need to check that $\theta \rightarrow 0$ as $x \downarrow b/r$. By the above equation, this is equivalent to checking that $zJ'' \rightarrow 0$ as $z \uparrow \infty$. But for $z \geq b^{-R}$, we have 
\[
zJ'' = \alpha(\alpha+1) F z^{-\alpha - 1} \rightarrow 0 \text{ as } z \rightarrow \infty
\]
as required. Thus, $\theta_s/\bar{c}_s$ is a continuous function of $x$ for $x \in [b/r, a]$ and since $x$ takes values in a compact set, we have that $\theta_s/\bar{c}_s$ takes values in a compact set of the form $[0, \tilde{M}]$ for some constant $\tilde{M} > 0$. Thus, we have that $\theta_s/w_s \in [0,M]$ for some constant $M>0$ as desired. Hence, by Novikov's criterion, we have that the stochastic exponential is a true martingale. 
\\
\\
Finally we need to prove the stated bound on $\E\left[ \frac{w_t^p}{p} \right]$. We just showed that $\theta_s/w_s$ takes values in a compact set of the form $[0, M]$ but a similar result is true for $c_s/w_s$. Indeed, observe that $\frac{c_s}{w_s} = \frac{c_s}{\bar{c}_s} \times \frac{\bar{c}_s}{w_s}$. The first term is clearly bounded between 0 and 1, and we showed above that the second term takes values in a compact set. Because of this, the integrand in the first exponential in \eqref{2} is bounded and if we let $\tilde{b}$ be an upper bound for it, \eqref{3} follows by the martingale property of the second term (the stochastic exponential) in \eqref{2}. 
\end{proof}

With the above result in hand, we can strengthen the conclusion of Lemma \ref{local} to: 

\begin{lem}\label{nonlocal} \ \\
Fix $\bar{c}_0 > 0$. Let 
\[
Y_t = \int_0^t e^{-\rho s} U(c_s) ds + e^{-\rho t} V(w_t, \bar{c}_t).
\]
Then for any feasible strategy, $(\theta, c)$, $Y$ is a supermartingale and for the proposed optimal control, $Y$ is a martingale. 
\end{lem}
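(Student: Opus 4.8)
The plan is to upgrade the conclusion of Lemma \ref{local} by a uniform integrability argument. Recall the standard fact that if $X$ is a local supermartingale with localising sequence $\tau_n\uparrow\infty$ a.s.\ and, for every fixed $t$, the family $\{X_{t\wedge\tau_n}\}_{n\ge1}$ is uniformly integrable, then $X$ is a genuine supermartingale; and if $X$ is moreover a local martingale, it is a genuine martingale. Indeed, letting $n\to\infty$ in $\E[X_{t\wedge\tau_n}\,|\,\mathcal F_s]\le X_{s\wedge\tau_n}$ (resp.\ with $=$) for $s\le t$, uniform integrability together with $X_{t\wedge\tau_n}\to X_t$ a.s.\ gives $L^1$-convergence on the left, hence a.s.\ convergence along a subsequence, while $X_{s\wedge\tau_n}\to X_s$ a.s.\ since $\tau_n\uparrow\infty$. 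So, granting Lemma \ref{local}, everything reduces to checking this uniform integrability for $Y_{t\wedge\tau_n}=\int_0^{t\wedge\tau_n}e^{-\rho s}U(c_s)\,ds+e^{-\rho(t\wedge\tau_n)}V(w_{t\wedge\tau_n},\bar c_{t\wedge\tau_n})$, which I would do case by case.

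For an arbitrary feasible strategy I would split on the sign of $1-R$. If $0<R<1$ then $U\ge0$ and $V\ge0$, so $Y\ge0$ and a nonnegative local supermartingale is automatically a supermartingale by conditional Fatou, requiring no integrability input. If $R>1$ then $Y$ is in fact uniformly bounded: feasibility forces $c_s\ge b\bar c_0>0$, so $|U(c_s)|=c_s^{1-R}/(R-1)\le(b\bar c_0)^{1-R}/(R-1)$ and the running integral is bounded; and the explicit form of $v$ shows $v$ is continuous on $[b/r,\infty)$ with $v(x)\to0$ as $x\to\infty$, hence bounded there, while $\bar c_t^{1-R}\le\bar c_0^{1-R}$, so $|V(w_t,\bar c_t)|=\bar c_t^{1-R}|v(w_t/\bar c_t)|$ is bounded. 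Thus $\{Y_{t\wedge\tau_n}\}_n$ is deterministically bounded, hence trivially uniformly integrable. If $R=1$ the delicate point is that $v$ and $\log c_s$ grow only logarithmically in $w$; using the feasibility bounds $b\bar c_0\le c_s\le rw_s/b$ and $\bar c_0\le\bar c_s\le rw_s/b$ from Corollary \ref{feasibility} one dominates $|Y_{t\wedge\tau_n}|$ by $C\big(1+\int_0^t w_s^\varepsilon\,ds+\sup_{u\le t}w_u^\varepsilon\big)$ for small $\varepsilon>0$, and $\E[\sup_{u\le t}w_u^\varepsilon]<\infty$ follows by applying the maximal inequality $\lambda\,\P(\sup_{u\le t}Z_u\ge\lambda)\le\E[Z_0]$ to the nonnegative supermartingale $Z$ introduced in the proof of the first lemma of this section and then Cauchy--Schwarz to peel off $\zeta_u^{-1}$, whose running supremum has finite exponential moments by the reflection principle. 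This exhibits a fixed integrable dominating random variable.

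For the conjectured optimal control I would invoke Lemma \ref{wealth}. Here $x_t=w_t/\bar c_t$ is confined to the compact set $[b/r,a]$, so $v(x_t)$ is bounded and $|V(w_t,\bar c_t)|$ is controlled by $(rw_t/b)^{(1-R)^+}$, respectively by $C(1+w_t^\varepsilon)$ when $R=1$. Since by Lemma \ref{wealth} the relevant Dol\'eans--Dade exponential $\mathcal E$ is a true martingale, $w_{t\wedge\tau_n}^p\le w_0^p e^{Ct}\mathcal E_{t\wedge\tau_n}$ with $\{\mathcal E_{t\wedge\tau_n}\}_n$ uniformly integrable (a martingale closed by $\mathcal E_t$), whence $\{w_{t\wedge\tau_n}^p\}_n$, and hence $\{V(w_{t\wedge\tau_n},\bar c_{t\wedge\tau_n})\}_n$, is uniformly integrable for the needed exponent $p$; the running integral $\int_0^{t\wedge\tau_n}e^{-\rho s}U(c_s)\,ds$ is dominated by a fixed integrable random variable via the moment bound $\E[w_s^p]\le w_0^p e^{\tilde b s}$ of \eqref{3}. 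Hence $\{Y_{t\wedge\tau_n}\}_n$ is uniformly integrable and $Y$ is a true martingale.

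The main obstacle is the logarithmic-growth bookkeeping for $R=1$ (and, for the optimal control, the range $0<R\le1$): there the term $e^{-\rho(t\wedge\tau_n)}V(w_{t\wedge\tau_n},\bar c_{t\wedge\tau_n})$ is evaluated at a random time with $w$ unbounded, so a fixed-time moment bound is not enough and one must extract an $\varepsilon$-moment bound on $\sup_{u\le t}w_u$ --- from the supermartingale $Z$ for general feasible strategies, and from the genuine-martingale property of $\mathcal E$ supplied by Lemma \ref{wealth} for the optimal control. Everything else is routine verification.
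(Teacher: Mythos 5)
Your proposal is correct, but it departs from the paper's proof in both halves, and in the supermartingale half it does substantially more work than is needed. For the supermartingale direction the paper uses a single one-sided bound valid for every $R$ at once: since $V_w\ge 0$ and $V(b\bar c_t/r,\bar c_t)=U(b\bar c_t)/\rho$, one has $Y_t\ge \int_0^t e^{-\rho s}U(b\bar c_0)\,ds+e^{-\rho t}U(b\bar c_0)/\rho=U(b\bar c_0)/\rho$, and a local supermartingale bounded below is a supermartingale by conditional Fatou --- exactly the observation you make for $0<R<1$ but then abandon for $R\ge 1$. Your two-sided uniform-integrability analysis for $R>1$ and especially the $\varepsilon$-moment bound on $\sup_{u\le t}w_u$ for $R=1$ (via the maximal inequality for $Z$, Cauchy--Schwarz, and exponential moments of $\sup\zeta_u^{-1}$) is sound but unnecessary for this direction, since only a lower bound is required. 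For the martingale direction your route is genuinely different: you obtain uniform integrability of $\{Y_{t\wedge\tau_n}\}_n$ by dominating $w_{t\wedge\tau_n}^p$ by $w_0^pe^{Ct}\mathcal E_{t\wedge\tau_n}$ and using that a stopped true martingale is closed, hence UI; the paper instead computes $dY_t=-\kappa e^{-\rho t}(V_w^2/V_{ww})\,dW_t=\kappa e^{-\rho t}\bar c_t^{\,1-R}z^2J''\,dW_t$, shows $z^2J''$ is bounded on $[z_a,\infty)$, and concludes $\E\langle Y\rangle_t<\infty$ from the moment bound $\E[\bar c_s^{\,2(1-R)}]\le C e^{\tilde b s}$ (splitting into $R^*<R<1$, $R=1$, $R>1$), invoking the fact that a local martingale with integrable quadratic variation is a true martingale. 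Both arguments lean on Lemma \ref{wealth}, but on different parts of it: you use the true-martingale property of the stochastic exponential with $p=1-R$, the paper uses the moment bound \eqref{3} with $p=2(1-R)$. Your approach buys a proof that does not require computing the explicit form of $dY$ or bounding $z^2J''$; the paper's buys a shorter supermartingale step and a more mechanical verification.
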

\begin{proof}
For any feasible strategy, Lemma \ref{local} implies that $Y$ is a local supermartingale. It is enough to show that $Y$ is bounded below, because it is easy to see that any local supermartingale bounded below is a supermartingale. Note that the fact that $V_w \geq 0$ (see proof of Lemma \ref{local}) together with the boundary condition \eqref{b/r_boundary} implies that 
\[
V(w_t, \bar{c}_t) \geq V\left(\frac{b\bar{c}_t}{r}, \bar{c}_t \right) = U\left(b\bar{c}_t \right)/\rho \geq U(b \bar{c}_0)/\rho > -\infty.
\] 
Hence, $V$ is bounded below. To show that $Y$ is bounded below, observe that 
\beq
Y_t & = &  \int_0^t e^{-\rho s} U(c_s) ds + e^{-\rho t} V(w_t, \bar{c}_t)
\\
& \geq & \int_0^t e^{-\rho s} U(b\bar{c}_0) ds + e^{-\rho t} U(b\bar{c}_0)/\rho
\\
& = & \frac{U(b \bar{c}_0)}{\rho}
\eeq
which gives that $Y$ is a supermartingale. 
\\
\\
Now consider the proposed optimal control. We know from Lemma \ref{local} that $Y$ is a local martingale under this control. To show that $Y$ is a martingale, it is enough to show that 
\[
\E\langle Y \rangle_t < \infty
\]
for all $t \geq 0$ as this implies the local martingale $Y$ is in fact a true martingale (see Corollary 1.25 in \cite{RY}). We have that under the conjectured optimal control
\[
dY_t = e^{-\rho t} V_w \theta_t \sigma dW_t
\]
where $\theta_t = -\frac{\mu-r}{\sigma^2} \frac{V_w}{V_{ww}}$ (for both $R \neq 1$ and $R = 1$) hence we obtain
\[
dY_t = -\kappa e^{-\rho t} \frac{V_w^2}{V_{ww}} dW_t 
\]
where $\kappa = \frac{\mu-r}{\sigma}$ as defined previously. First recall that (for both $R \neq 1$ and $R = 1$)
\beq
V_w & = & \bar{c}^{\; -R} v'
\\
V_{ww} & = & \bar{c}^{\; -1-R} v''
\\
J'' & = & 1/v''
\\
J' & = & -x
\\
z & = & v'
\eeq
hence 
\[
\frac{V_w^2}{V_{ww}} = \frac{ \bar{c}^{\; -2R} (v')^2}{\bar{c}^{\; -1-R} v''} = - \bar{c}^{\; 1-R} z^2 J''. 
\]
Now, under the conjectured optimal control we have  $z_a \leq z < \infty$, and $z^2J''$ is continuous in this region by construction. For $z_a \leq z \leq b^{-R}$, $z^2J''$ is bounded, since a continuous function on a compact set is bounded. For the final region, $b^{-R} \leq z < \infty$, we have
\[
z^2 J'' = \alpha(\alpha+1)Fz^{-\alpha} \rightarrow 0 \text{ as } z \rightarrow \infty 
\]
because $-\alpha < 0$. Hence, $z^2 J''$ is bounded on $b^{-R} \leq z < \infty$ as well. So $z^2J''$ is bounded on the whole interval $z_a \leq z < \infty$, say
\[
|z^2 J''| \leq K
\]
for some constant $K > 0$. We have
\[
d\langle Y \rangle_t = e^{-2 \rho t} \kappa^2 \bar{c}_t^{\; 2(1-R)} (z^2J'')^2 dt
\]
which gives 
\begin{eqnarray}
\E \langle Y \rangle_t & = & \E \int_0^t e^{-2\rho s} \kappa^2 \bar{c}_s^{\; 2(1-R)} (z^2 J'')^2 ds \nonumber
\\
& \leq & K^2 \kappa^2 \int_0^t \E\left(\bar{c}_s^{\; 2(1-R)}\right) ds \label{finite qvariation}
\end{eqnarray}
where the use of Fubini's Theorem is justified because the integrand is positive. 
\\
\\
Recall that we require $R > R^*$, as explained in section \ref{marketmodeldraw}, which gives us the three following cases. 
\begin{itemize}
\item $R^* < R < 1$: We have 
\[
\bar{c}_s \leq \frac{rw_s}{b}
\]
from the feasibility condition in Corollary \ref{feasibility}. This implies that 
\[
\bar{c}_s^{\; 2(1-R)} \leq \left(\frac{r}{b} \right)^{2(1-R)} w_s^{2(1-R)}
\]
which gives
\beq
\E\left(\bar{c}_s^{\; 2(1-R)} \right) & \leq & \left( \frac{r}{b} \right)^{2(1-R)} \E \left( w_s^{2(1-R)} \right)
\\
& \leq & \left(\frac{r}{b} \right)^{2(1-R)} w_0^{2(1-R)} \exp(\tilde{b} s)
\eeq
using the bound given by \eqref{3} taking $p=2(1-R)$. Substituting this into \eqref{finite qvariation} gives 
\beq
\E \langle Y \rangle_t  & \leq & K^2 \kappa^2 \left( \frac{r}{b} \right)^{2(1-R)} w_0^{2(1-R)} \int_0^t \exp(\tilde{b} s) ds
\\
& < & \infty. 
\eeq

\item $R > 1$: We have that $\bar{c}$ is an increasing process and $\bar{c}_0 > 0$ by assumption. Thus 
\[
\bar{c}_s^{\; 2(1-R)} \leq \bar{c}_0^{\; 2(1-R)}.
\]
Substituting this into \eqref{finite qvariation} gives
\beq
\E \langle Y \rangle_t & \leq & K^2 \kappa^2 \int_0^t \bar{c}_0^{\; 2(1-R)} ds
\\
& = & K^2 \kappa^2 \bar{c}_0^{\; 2(1-R)} t 
\\
& < & \infty. 
\\
\eeq

\item $R = 1$: In this case, \eqref{finite qvariation} becomes
\beq
\E \langle Y \rangle_t & \leq &  K^2 \kappa^2 \int_0^t 1 ds 
\\
& = & K^2 \kappa^2 t 
\\
& < & \infty. 
\eeq

\end{itemize} 

In all three cases, $\E \langle Y \rangle_t < \infty$ for all $t \geq 0$ which implies that $Y$ is a martingale under the conjectured optimal control. 
\end{proof}

As a final step, we now address the asymptotic behaviour of the residual term $\E[e^{-\rho t} V(w_t, \bar{c}_t)]$. This is essentially the argument given in Lemma 6 in Dybvig \cite{D}.

\begin{lem}\label{residue} \ \\
Fix $\bar{c}_0 > 0$. For all feasible strategies 
\[
\liminf_{t \rightarrow \infty} \E[e^{-\rho t} V(w_t, \bar{c}_t)] \geq 0.
\]
For the optimal control
\[
\lim_{t \rightarrow \infty} \E[e^{-\rho t} V(w_t, \bar{c}_t)] = 0.
\]
\end{lem}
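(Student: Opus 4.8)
The plan is to reduce everything to three ingredients already available: the lower bound on $V$ from the proof of Lemma \ref{nonlocal}, the structural form $V(w,\bar c)=\bar c^{1-R}v(w/\bar c)$ (resp.\ $v(w/\bar c)+\rho^{-1}\log\bar c$), and the moment estimate of Lemma \ref{wealth}. For the first inequality I would use only that $V$ is bounded below along feasible paths: since $V_w\ge0$ and, by \eqref{b/r_boundary}, any feasible path stays in $\{w/\bar c\ge b/r\}$ with $\bar c_t\ge\bar c_0$, we get
\[
V(w_t,\bar c_t)\ \ge\ V\!\left(\tfrac{b\bar c_t}{r},\bar c_t\right)=\frac{U(b\bar c_t)}{\rho}\ \ge\ \frac{U(b\bar c_0)}{\rho}=:m>-\infty ,
\]
so $\E[e^{-\rho t}V(w_t,\bar c_t)]\ge m\,e^{-\rho t}\to0$, which yields $\liminf_{t\to\infty}\E[e^{-\rho t}V(w_t,\bar c_t)]\ge0$.

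For the optimal control I would prove the stronger statement $e^{-\rho t}\,\E\big[\,|V(w_t,\bar c_t)|\,\big]\to0$, which immediately gives $\big|\E[e^{-\rho t}V(w_t,\bar c_t)]\big|\le e^{-\rho t}\E|V(w_t,\bar c_t)|\to0$. Here I use that under the optimal control (after the instantaneous initial adjustment of $\bar c_0$ when $w_0/\bar c_0>a$) one has $x_t:=w_t/\bar c_t\in[b/r,a]$ for all $t$. For $R\neq1$, write $V(w_t,\bar c_t)=w_t^{\,1-R}\,x_t^{\,R-1}v(x_t)$; since $v$ is continuous on the compact interval $[b/r,a]$ and $b/r>0$, the factor $x_t^{\,R-1}v(x_t)$ is bounded, so $|V(w_t,\bar c_t)|\le C\,w_t^{\,1-R}$. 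If $R>1$, then $1-R<0$ and the pathwise bound $w_t\ge b\bar c_t/r\ge b\bar c_0/r$ from Corollary \ref{feasibility} give $w_t^{\,1-R}\le(b\bar c_0/r)^{1-R}$, hence $e^{-\rho t}\E|V(w_t,\bar c_t)|\le C(b\bar c_0/r)^{1-R}e^{-\rho t}\to0$. If $R^*<R<1$, apply Lemma \ref{wealth} with $p=1-R\in(0,1)$ to get $\E[w_t^{\,1-R}]\le w_0^{\,1-R}e^{\tilde b t}$; since the consumption contribution $-(1-R)c_s/w_s$ is nonpositive, one may take
\[
\tilde b=\sup_{\pi\in\R}\Big[(1-R)r+(1-R)(\mu-r)\pi-\tfrac12R(1-R)\sigma^2\pi^2\Big]=(1-R)\Big(r+\frac{\kappa^2}{2R}\Big),
\]
and the well-posedness hypothesis $\gamma_M>0$ is precisely $\rho>(1-R)(r+\kappa^2/(2R))=\tilde b$, so $e^{-\rho t}\E|V(w_t,\bar c_t)|\le C\,w_0^{\,1-R}e^{(\tilde b-\rho)t}\to0$.

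For $R=1$ I would use $V(w_t,\bar c_t)=v(x_t)+\rho^{-1}\log w_t-\rho^{-1}\log x_t$, so that boundedness of $v$ and of $\log$ on $[b/r,a]$ gives $|V(w_t,\bar c_t)|\le C'+\rho^{-1}|\log w_t|$; then, using $w_t\ge b\bar c_0/r>0$ and the elementary inequality $\log^+y\le\varepsilon^{-1}y^{\varepsilon}$, I get $|\log w_t|\le C''+\varepsilon^{-1}w_t^{\varepsilon}$ for any $\varepsilon\in(0,1)$, and Lemma \ref{wealth} with $p=\varepsilon$ gives $\E[w_t^{\varepsilon}]\le w_0^{\varepsilon}e^{\tilde b_\varepsilon t}$ with $\tilde b_\varepsilon\le\varepsilon r+\varepsilon\kappa^2/(2(1-\varepsilon))\to0$ as $\varepsilon\downarrow0$; choosing $\varepsilon$ small enough that $\tilde b_\varepsilon<\rho$ finishes this case. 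Combining with the first part, $0\le\liminf_t\E[e^{-\rho t}V(w_t,\bar c_t)]\le\limsup_t\E[e^{-\rho t}V(w_t,\bar c_t)]\le\lim_t e^{-\rho t}\E|V(w_t,\bar c_t)|=0$, which is the claimed limit.

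The step I expect to be the main obstacle is the exponential estimate in the cases $R\le1$: one has to extract from Lemma \ref{wealth} a growth rate $\tilde b$ that is sharp enough to be strictly dominated by $\rho$. This works because, after discarding the nonpositive consumption term, the remaining drift is concave quadratic in $\theta/w$ and its unconstrained maximum is exactly $(1-R)(r+\kappa^2/(2R))$ (for $p=1-R$), and this quantity is $<\rho$ if and only if $\gamma_M>0$ — the very condition that makes the problem well-posed. Everything else (boundedness of $v$ on $[b/r,a]$, the $\log^+$ estimate for $R=1$, and the final $\liminf$--$\limsup$ squeeze) is routine.
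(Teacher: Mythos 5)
Your proof is correct, and the overall strategy (the trivial lower bound for the $\liminf$, then a case split on $R$ with moment estimates on $w_t$ for the optimal control) is the same as the paper's; the differences are in how two of the three cases are closed. For $R>1$ the paper argues by sign: $J(0)=0$ gives $v\le 0$, hence $V\le 0$, and the already-established $\liminf\ge 0$ squeezes the limit to zero; you instead note that $w_t\ge b\bar c_0/r$ makes $w_t^{\,1-R}$, and hence $|V|$, uniformly bounded, so $e^{-\rho t}\E|V|\to 0$ directly. Both are fine (indeed for $R>1$ one has $U(b\bar c_0)/\rho\le V\le 0$, so the conclusion is immediate either way). For $R^*<R<1$ the two arguments coincide: sandwiching $V$ between constant multiples of $w^{1-R}$ and showing $\E[e^{-\rho t}w_t^{1-R}]\le w_0^{1-R}e^{-R\gamma_M t}$ by discarding the consumption term and maximising the concave quadratic in $\theta/w$; your extraction of the explicit rate $\tilde b=(1-R)(r+\kappa^2/(2R))$ from Lemma \ref{wealth} is legitimate, since the paper's $\tilde b$ is any upper bound for the drift integrand and the stochastic exponential is a true martingale under the optimal control. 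For $R=1$ the paper computes $\E[e^{-\rho t}\log w_t]$ directly from the explicit formula \eqref{2} with $p=1$, using that the stochastic integral has zero mean because $\theta/w$ is bounded; your route via $\log^+y\le\varepsilon^{-1}y^{\varepsilon}$ and Lemma \ref{wealth} with a small exponent $p=\varepsilon$ (so that $\tilde b_\varepsilon<\rho$) avoids that computation at the cost of an extra elementary inequality, and is equally valid since $\rho=\gamma_M>0$ in this case. Your version also delivers the marginally stronger conclusion $e^{-\rho t}\E|V(w_t,\bar c_t)|\to 0$, which the paper obtains only implicitly through its two-sided bounds on $V$.
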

\begin{proof}
Note that the fact that $V_w \geq 0$ (see proof of Lemma \ref{local}) together with the boundary condition \eqref{b/r_boundary} implies that 
\[
V(w_t, \bar{c}_t) \geq V\left(\frac{b\bar{c}_t}{r}, \bar{c}_t \right) = U\left(b\bar{c}_t \right)/\rho \geq U(b \bar{c}_0)/\rho > -\infty.
\]
Consequently, 
\[
\liminf_{t \rightarrow \infty} \E[e^{-\rho t} V(w_t, \bar{c}_t)] \geq \lim_{t \rightarrow \infty} e^{-\rho t} U(b \bar{c}_0)/\rho= 0.
\]
Now, for the conjectured optimal strategy, we will consider the cases $R>1$, $R=1$, and $R^*<R<1$ separately. For $R>1$, we have  $J(0) = 0$ hence from 
\[
v(x) = \inf_{0<z<z_{b/r}} \{ J(z) + xz \}.
\]
we deduce that $v(x) \leq 0$ for all $x \geq b/r$, which implies that $V \leq 0$ by \eqref{value function}. But we just showed that 
\[
\liminf_{t \rightarrow \infty} \E[e^{-\rho t} V(w_t, \bar{c}_t)] \geq 0
\]
which forces 
\[
\lim_{t \rightarrow \infty} \E[e^{-\rho t} V(w_t, \bar{c}_t)] = 0
\]
for the conjectured optimal control. Now for $R=1$, using the boundary condition given in \eqref{b/r_boundary} and the fact that $V_{\bar{c}} \leq 0$, we have 
\beq
V(w,\bar{c})&  \geq & V\left(w, \frac{rw}{b} \right)
\\
& = & U(rw)/\rho
\\
& = & \frac{1}{\rho}(\log w + \log r)
\eeq
which gives a lower bound for $V$. Also, recall that for $x \equiv \frac{w}{\bar{c}} \geq a$ we automatically increase $\bar{c}$ until $x=a$. Thus, for $\bar{c} \leq w/a$, $V(w, \bar{c}) = V(w, w/a)$. This together with the fact that $V_{\bar{c}} \leq 0$ implies that  
\beq
V(w,\bar{c}) & \leq &  V\left(w, \frac{w}{a}\right)
\\
& = & \frac{1}{\rho} \left( \log w + 1 + \log \rho \right) + A
\eeq
where the final equation is by \eqref{v explicit R=1} and \eqref{1 value function}. Hence, to show that 
\[
\E\left[e^{-\rho t} V(w_t, \bar{c}_t) \right] \rightarrow 0 \text{ as } t \rightarrow \infty
\]
it is enough to show that 
\[
\E \left[ e^{-\rho t} \log w_t \right] \rightarrow 0 \text{ as } t \rightarrow \infty.
\]
Taking the logarithm of \eqref{2} for $p=1$ gives 
\beq
\mc{
\E\left[e^{-\rho t} \log w_t \right] 
} \\
& = & e^{-\rho t} \left( \log w_0 \right) + e^{-\rho t} \E \left[ \int_0^t \left(r + \frac{\theta_s}{w_s} (\mu-r) - \frac{c_s}{w_s} - \frac12 \left(\frac{\theta_s}{w_s} \right)^2 \sigma^2 \right) ds \right]
\\
& & \quad +\: e^{-\rho t} \E \left[\int_0^t \frac{\theta_s}{w_s} \sigma dW_s \right] 
\\
& \leq & e^{-\rho t} \left( \log w_0 + \E \left[ \int_0^t \left(r + \frac{(\mu-r)^2}{2 \sigma^2} \right) ds + \int_0^t \frac{\theta_s}{w_s} \sigma dW_s \right] \right)
\eeq
where the quadratic form $(\mu -r)\frac{\theta_s}{w_s} - \frac{\sigma^2}{2} \left(\frac{\theta_s}{w_s}\right)^2$ was replaced by its largest value $(\mu-r)^2/2\sigma^2$ and $\frac{c_s}{w_s}$ was replaced by 0, a lower bound. Thus 
\beq
\E\left[ e^{-\rho t} \log w_t \right] & \leq & e^{-\rho t} \left( \log w_0 + \E \left[ \int_0^t \left(r + \frac{(\mu-r)^2}{2 \sigma^2} \right) ds + \int_0^t \frac{\theta_s}{w_s} \sigma dW_s \right] \right)
\\
& = & e^{-\rho t} \left( \log w_0 + \left(r + \frac{(\mu -r)^2}{2\sigma^2} \right) t \right) 
\\
& \rightarrow & 0 \quad \text{ as } t \rightarrow \infty
\eeq
where $\frac{\theta_s}{w_s}$ bounded (see proof of Lemma \ref{wealth}) implies that $\E\left[ \int_0^t \left(\frac{\theta_s}{w_s} \right)^2 \sigma^2 ds \right] < \infty$ and therefore we have $\E\left[ \int_0^t \left(\frac{\theta_s}{w_s} \right) \sigma dW_s \right] = 0$. 
\\
\\
Finally for $R^*<R<1$, by the same reasoning, 
\beq 
V(w,\bar{c}) & \geq & V\left(w, \frac{rw}{b} \right)
\\
& = & U(rw) /\rho
\\
& = &  \frac{r^{1-R} w^{1-R}}{\rho(1-R)}.
\eeq
We also have
\beq
V(w,\bar{c}) & \leq & V\left(w, \frac{w}{a} \right) 
\\
& = & U(w) \left[ \frac{1}{-A(1-R') }\right]^{-1/R'} 
\\
& = & \frac{w^{1-R}}{1-R} \left[ \frac{1}{-A(1-R') }\right]^{-1/R'}
\eeq
where the first equality is by \eqref{v explicit Rnot1} and \eqref{value function}. Hence, to show that 
\[
\E\left[e^{-\rho t} V(w_t, \bar{c}_t) \right] \rightarrow 0 \text{ as } t \rightarrow \infty
\]
it is enough to show that 
\[
\E \left[ e^{-\rho t} w_t^{1-R} \right] \rightarrow 0 \text{ as } t \rightarrow \infty.
\]
Taking $p=1-R$ in \eqref{2} gives
\beq
\mc{
\E \left[ e^{-\rho t} w_t^{1-R} \right]
} \\
& = & w_0^{1-R} \E \left[ \exp  \left( \int_0^t (1-R) \left( r + \frac{\theta_s}{w_s} (\mu -r) - \frac{c_s}{w_s} - \frac{R}{2} \left( \frac{\theta_s}{w_s} \right)^2 \sigma^2\right) - \rho \:  ds \right) \right.
\\
& & \times \: \left. \exp\left(\int_0^t (1-R) \frac{\theta_s}{w_s} \sigma dW_s - \frac{1}{2} \int_0^t (1-R)^2 \left( \frac{\theta_s}{w_s} \right) ^2 \sigma^2 ds \right) \right]
\\
& \leq & w_0^{1-R} \E \left[ \exp \left( \int_0^t \left( (1-R) \left(r + \frac{\kappa^2}{2R} \right) - \rho \right) ds \right) \right.
\\
& & \times \: \left. \exp \left( \int_0^t (1-R) \frac{\theta_s}{w_s} \sigma dW_s - \frac12 \int_0^t (1-R)^2 \left( \frac{\theta_s}{w_s} \right)^2 \sigma ^2 ds \right) \right]
\eeq
where the quadratic form $(\mu-r) \frac{\theta_s}{w_s} - \frac{R\sigma^2}{2} \left( \frac{\theta_s}{w_s} \right)^2$ in $\frac{\theta_s}{w_s}$ was replaced by its maximum value $(\mu-r)^2/2\sigma^2R$ and $\frac{c_s}{w_s}$ was replaced by 0, a lower bound. Thus 
\beq
\E \left[ e^{-\rho t} w_t^{1-R} \right] & \leq & w_0^{1-R} \exp \left( - \left( \rho - (1-R) \left( r + \frac{\kappa^2}{2R} \right) \right) t \right)
\\
& = & w_0^{1-R} \exp(-R \gamma_M t)
\\
& \rightarrow & 0 \quad \text{ as } t \rightarrow \infty
\eeq
since the stochastic eponential is a supermartingale thus has expectation less or equal to 1 and because $\gamma_M$ (defined in \eqref{gamma_M}) is strictly positive by assumption (see section \ref{marketmodeldraw}). 

\end{proof}

We are now finally ready to provide a proof of the verification theorem, Theorem \ref{verify}. 
\begin{proof}[Proof of Theorem \ref{verify}]
To prove optimality, we need to show that for the optimal control
\[
V(w_0, \bar{c}_0) = \E \left[ \int_{t=0}^\infty e^{-\rho t} U(c_t) dt \right]
\]
and also that for any other feasible strategy, $(\theta, c)$,
\[
V(w_0, \bar{c}_0) \geq  \E \left[ \int_{t=0}^\infty e^{-\rho t} U(c_t) dt \right]. 
\]
From Lemma \ref{nonlocal}, we have that for the optimal control, $Y$ is a martingale which gives
\beq
V(w_0, \bar{c}_0) & = & Y_0
\\
& = & \lim_{t \rightarrow \infty} \E[Y_t]
\\
& = & \lim_{t \rightarrow \infty} \E \left[ \int_{s=0}^t e^{- \rho s} U(c_s) ds + e^{-\rho t} V(w_t, c_t) \right]
\\
& = & \lim_{t \rightarrow \infty} \E \left[ \int_{s=0}^t e^{- \rho s} U(c_s) ds \right] + \lim_{t \rightarrow \infty} \E [e^{-\rho t} V(w_t, c_t)]
\\
& = & \E \left[ \int_{t=0}^\infty e^{-\rho t} U(c_t) dt \right]
\eeq
where exchanging the order of the expectation and the limit is justified by $U(c_s) \geq U(b \bar{c}_0) > -\infty$. We also used the result $\lim_{t \rightarrow \infty} \E [e^{-\rho t} V(w_t, c_t)] = 0$ which was obtained in Lemma \ref{residue}.
\\
\\
To complete the proof observe that by Lemma \ref{nonlocal}, for any feasible strategy, $Y$ is a supermartingale, hence 
\beq
V(w_0, \bar{c}_0) & = & Y_0
\\
& \geq & \lim_{t \rightarrow \infty} \E[Y_t]
\\
& = & \lim_{t \rightarrow \infty} \E\left[ \int_{s=0}^t e^{- \rho s} U(c_s) ds + e^{-\rho t} V(w_t, c_t) \right]
\\
& = & \lim_{t \rightarrow \infty} \E \left[ \int_{s=0}^t e^{- \rho s} U(c_s) ds \right] + \lim_{t \rightarrow \infty} \E [e^{-\rho t} V(w_t, c_t)]
\\
& \geq & \E \left[ \int_{t=0}^\infty e^{-\rho t} U(c_t) dt \right]
\eeq
where exchanging the order of the expectation and the limit is justified by Fatou's lemma (or because $U(c_s) \geq U(b \bar{c}_0) > -\infty$), and we used \mbox{Lemma \ref{residue}} to obtain $\lim_{t \rightarrow \infty} \E [e^{-\rho t} V(w_t, c_t)] \geq 0$.
\\
\\
Hence, we have shown that our conjectured solution is optimal. 
\end{proof} 

\section{The problem is ill-posed for $R \leq R^*$} \label{illposed}

In the standard Merton problem \cite{M}, one observes that for $R \leq R^*$ (for $R^*$ as defined in \eqref{Rstar}), it is possible to find strategies that give the investor infinite expected utility. We observe the same scenario in the case we consider here. The Merton problem with a drawdown constraint on consumption is well-posed if and only if $R > R^*$. In the previous section, we presented and verified the optimal solution for $R > R^*$. Now, for completeness, we will demonstrate a class of strategies that give infinite expected utility if we take $R \leq R^*$. 

\begin{prop}
For $R \leq R^*$, the Merton problem with a drawdown constraint on consumption is ill-posed. That is to say, it is possible to find investment and consumption strategies that give the investor infinite expected utility. 
\end{prop}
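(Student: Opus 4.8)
The plan is to exhibit, for every large $L$, a single feasible and admissible strategy $\sigma_L$ and to show that its expected utility tends to $+\infty$ as $L\to\infty$; this forces the value function to be $+\infty$. Throughout I use that $R\le R^*<1$, so $U(x)=x^{1-R}/(1-R)$ is positive, strictly increasing and unbounded, and I assume the initial data satisfies $rw_0>b\bar c_0$ (the boundary case $w_0/\bar c_0=b/r$ is degenerate: the only feasible control is $\theta\equiv 0$, $c\equiv b\bar c_0$, with value $U(b\bar c_0)/\rho<\infty$). The strategy $\sigma_L$ is: park $b\bar c_0/r$ in the bank and consume its interest $b\bar c_0$ forever; invest the residual ``free wealth'' $G_t:=w_t-b\bar c_0/r>0$ entirely along the Merton rule $\theta_t=\pi^M G_t$, $\pi^M=(\mu-r)/R\sigma^2$, taking nothing from it; and at the first passage time $\tau_L$ of $G$ to the level $L$, irreversibly raise consumption to the maximal sustainable rate $rw_{\tau_L}$ and hold it there (i.e.\ freeze all wealth in the bank and consume the interest).

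First I would check legality. Since $c_t$ equals $b\bar c_0$ on $[0,\tau_L)$ and the constant $rw_{\tau_L}=b\bar c_0+rL>b\bar c_0$ on $[\tau_L,\infty)$, the consumption process is non-decreasing, so $\bar c_t=c_t$ and the drawdown constraint $c_t\ge b\bar c_t$ reduces to $(1-b)c_t\ge 0$: $\sigma_L$ is feasible. For admissibility, before $\tau_L$ the reserve is constant and $G_t$ solves $dG_t=G_t[(r+\pi^M(\mu-r))dt+\pi^M\sigma\,dW_t]$ (using $rw_t-b\bar c_0=rG_t$), a geometric Brownian motion, so $w_t=G_t+b\bar c_0/r>0$; after $\tau_L$ the wealth is frozen at $b\bar c_0/r+L>0$.

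Next I would bound the expected utility from below by discarding the (non-negative) contribution of $[0,\tau_L)$ and keeping
$\E\!\big[\int_{\tau_L}^\infty e^{-\rho t}\tfrac{(b\bar c_0+rL)^{1-R}}{1-R}\,dt\big]=\tfrac{(b\bar c_0+rL)^{1-R}}{\rho(1-R)}\,\E[e^{-\rho\tau_L}]$, where I used that $rw_{\tau_L}=b\bar c_0+rL$ is deterministic. Writing $\log G_t=\log G_0+mt+\varsigma W_t$ with $m=r+\tfrac{\kappa^2}{R}-\tfrac{\kappa^2}{2R^2}$ and $\varsigma^2=\kappa^2/R^2$, the standard first-passage Laplace transform gives $\E[e^{-\rho\tau_L}]=(G_0/L)^{\lambda}$ with $\lambda>0$ the positive root of $q(\lambda):=\tfrac12\varsigma^2\lambda^2+m\lambda-\rho=0$. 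Hence $\E[\text{utility of }\sigma_L]\gtrsim \mathrm{const}\cdot L^{\,1-R-\lambda}$ as $L\to\infty$. The crux is the elementary identity, obtained by substituting $\lambda=1-R$ and simplifying, $q(1-R)=-R\gamma_M$; since $q$ is an upward parabola with exactly one positive root $\lambda$, this yields $1-R<\lambda\iff q(1-R)<0\iff\gamma_M>0$, i.e.\ $\lambda\le 1-R\iff R\le R^*$. Thus for $R<R^*$ the exponent $1-R-\lambda$ is strictly positive, $\E[\text{utility of }\sigma_L]\to\infty$, and the problem is ill-posed.

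The step I expect to be the main obstacle is the boundary case $R=R^*$: there $\lambda=1-R$ exactly, the family $(\sigma_L)$ gives only a bounded set of values (indeed $\sup_L\E[\text{utility of }\sigma_L]\le (rw_0)^{1-R}/\rho(1-R)$), and a genuinely different construction is needed. The natural fix is to iterate the ratchet — perform infinitely many consumption increases at successive first-passage times, retaining a suitable portion of wealth invested between ratchets so that consumption grows geometrically — and to show that the resulting series $\sum_k U(c_k)\,\E[e^{-\rho\tau_k}]$ still diverges at exactly the same threshold; balancing the geometric growth of $c_k$ against the hitting-time discount $\E[e^{-\rho\Delta_k}]$, while keeping each epoch's free wealth strictly positive, is the delicate point. (An alternative is to run a short phase of proportional consumption $\epsilon\times$ free wealth after a ratchet and let $\epsilon\downarrow 0$.) Everything else — the identity $q(1-R)=-R\gamma_M$ and the first-passage formula — is routine.
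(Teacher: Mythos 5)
Your argument for $R<R^*$ is correct and takes a genuinely different, arguably more elementary route than the paper. You use a single ratchet (invest the free wealth $G$ along the Merton line, lock in consumption $b\bar c_0+rL$ at the first passage of $G$ to $L$) and reduce everything to the first-passage Laplace transform $\E[e^{-\rho\tau_L}]=(G_0/L)^{\lambda}$; the identity $q(1-R)=-R\gamma_M$ checks out and cleanly identifies the threshold. The paper instead runs a \emph{continuous} ratchet, $c_t=\lambda\bar w_t$ with $\theta_t=\pi_M(w_t-\lambda\bar w_t/r)$, solves for $\bar w_t$ explicitly via the Cvitani\'c--Karatzas change of variable $\hat w_t=(w_t-\alpha\bar w_t)\bar w_t^{\alpha/(1-\alpha)}$, and bounds the objective below by $\int_0^\infty e^{(-\rho+(1-\alpha)(1-R)(r+\kappa^2/R))t}\,dt$ (up to constants), which diverges once $\lambda$ is small enough. (Two cosmetic points: since $c_t=b\bar c_0<\bar c_0$ initially, it is $\bar c_t=\bar c_0$, not $\bar c_t=c_t$, on $[0,\tau_L)$ --- feasibility still holds; and your family only shows the supremum is $+\infty$, whereas the statement literally asks for a strategy with infinite expected utility, which the paper's single strategy delivers.)

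The genuine gap is the boundary case $R=R^*$, which is part of the claim ($R\le R^*$) and which, as you yourself compute, your family provably cannot reach: at $\gamma_M=0$ one has $\lambda=1-R$ exactly and $\sup_L$ of your lower bounds is finite. Your proposed fixes (iterated ratchet, or a vanishing proportional-consumption phase) are only sketched, and the ``delicate point'' you flag --- balancing the geometric growth of $c_k$ against $\E[e^{-\rho\Delta_k}]$ while keeping each epoch's free wealth positive --- is precisely where the work lies; it is not obviously routine, since a naive iterated ratchet that locks in \emph{all} wealth at each stage degenerates to your single ratchet in the limit. The paper's construction closes this case without extra effort because the consumption \emph{rate} $\lambda$ is a free parameter: the exponent in its lower bound satisfies $-\rho+(1-\alpha)(1-R)(r+\kappa^2/R)\ge(1-R)\bigl(\tfrac{\kappa^2}{2R}-\alpha r-\tfrac{\alpha\kappa^2}{R}\bigr)$ under $\gamma_M\le 0$, and the right side is \emph{strictly} positive for $\alpha=\lambda/r$ small --- i.e.\ consuming more slowly strictly raises the growth rate of $\bar w_t^{1-R}$ above $\rho$ even at criticality. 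That extra degree of freedom is exactly what your one-shot family lacks, so as written the proposal proves the proposition only for $R<R^*$.
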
 
\begin{proof}
We want to show that for $R \leq R^*$, we can choose our investment and consumption strategies to make our investment objective 
\[
\E \left[ \int_0^\infty e^{-\rho t} U(c_t) dt \right]
\]
infinite. We will choose controls such that consumption is non-decreasing. This corresponds to taking $b=1$ in the drawdown constraint, and such a strategy would then clearly work for any $0 < b < 1$ as well. 
\\
\\
Let $\theta_t = \pi_M(w_t - \frac{\lambda\bar{w}_t}{r})$ where $\bar{w}_t = \max_{0\leq s \leq t} w_s$, and $\pi_M = \frac{\mu-r}{\sigma^2 R}$ is the so-called Merton ratio. This is similar to what we see in the standard Merton problem \cite{M}, where the optimal investment strategy is to invest $\pi_M w_t$ in the risky stock, for $\pi_M$ as just defined.
\\
\\
In terms of consumption, let $c_t = \lambda \bar{w}_t$ for $\lambda > 0$ which we will specify later. Substituting this into our wealth equation \eqref{wealth equation} gives 
\[
dw_t = \left( w_t - \frac{\lambda \bar{w}_t}{r} \right) \left[ \left( r + \frac{\kappa^2}{R}  \right)dt + \frac{\kappa}{R} dW_t \right]
\]
where $\kappa = \frac{\mu-r}{\sigma}$ as defined previously. We want to get an explicit solution for $\bar{w}_t$ because this will enable us to calculate our investment objective. To do this, we will use the following argument by Cvitani\'c and Karatzas in \cite{CK}. From the above SDE, we obtain
\[
d\left(w_t - \frac{\lambda \bar{w}_t}{r} \right) = \left(w_t - \frac{\lambda \bar{w}_t}{r} \right) \left[ \left( r + \frac{\kappa^2}{R}  \right)dt + \frac{\kappa}{R} dW_t \right] - \frac{\lambda}{r} d \bar{w}_t.
\]
For convenience, let $\alpha = \lambda/r$ and define  
\[
\hat{w}_t = \left(w_t - \alpha \bar{w}_t \right) \bar{w}_t ^{\frac{\alpha}{1-\alpha}}.
\]
By It\^o's formula
\beq 
d\hat{w}_t & = & (w_t - \alpha \bar{w}_t)d\bar{w}_t^{\frac{\alpha}{1-\alpha}} + \bar{w}_t^{\frac{\alpha}{1-\alpha}} d(w_t - \alpha \bar{w}_t) + d \left\langle w_t - \alpha \bar{w}_t, \bar{w}_t^{\frac{\alpha}{1-\alpha}} \right\rangle 
\eeq
but the last term is zero because $\bar{w}_t^{\frac{\alpha}{1-\alpha}}$ is increasing so has finite variation. Hence we obtain
\beq
d\hat{w}_t & = & (w_t - \alpha \bar{w}_t) \left( \frac{\alpha \bar{w}_t^{\frac{\alpha}{1-\alpha}-1}}{1-\alpha}  d\bar{w}_t \right) 
\\
& & +\: \bar{w}_t^{\frac{\alpha}{1-\alpha}} \left\{ (w_t - \alpha\bar{w}_t) \left[ \left( r + \frac{\kappa^2}{R}  \right)dt + \frac{\kappa}{R} dW_t \right] - \alpha d\bar{w}_t \right\}
\\
& = & (w_t - \alpha \bar{w}_t) \bar{w}_t^{\frac{\alpha}{1-\alpha}}\left[ \left( r + \frac{\kappa^2}{R}  \right)dt + \frac{\kappa}{R} dW_t \right] + \frac{\alpha \bar{w}_t^{\frac{\alpha}{1-\alpha} - 1}}{1- \alpha} [(w_t- \bar{w}_t) d\bar{w}_t]
\eeq
and the last term is zero by the definition of $\bar{w}_t$. Therefore, we get 
\[
d \hat{w}_t = \hat{w}_t\left[ \left( r + \frac{\kappa^2}{R}  \right)dt + \frac{\kappa}{R} dW_t \right]
\]
which does not depend on $\alpha$. We can solve the above SDE explicitly to get 
\[
\hat{w}_t = (1-\alpha)w_0^{\frac{1}{1-\alpha}} \exp \left[ \left(r + \frac{\kappa^2}{R} \right) t  + \frac{\kappa}{R} W_t -  \frac{\kappa^2 t}{2 R^2}  \right]
\]
where we let our initial wealth be $w_0$. From the definition of $\hat{w}_t$, we have that 
\begin{equation}
\max_{0 \leq s \leq t} \hat{w}_s = (\bar{w}_t - \alpha \bar{w}_t) \bar{w}_t^{\frac{\alpha}{1-\alpha}} =
(1-\alpha) \bar{w}_t^{\frac{1}{1-\alpha}}. \label{tilde max 1}
\end{equation}
Define 
\[
Y_t = \exp \left[ \frac{\kappa}{R} W_t -  \frac{\kappa^2 t}{2 R^2}  \right]
\]
and denote 
\[
\bar{Y}_t = \max_{0 \leq s \leq t} Y_s.
\]
Then we can rewrite $\hat{w}_t$ as  
\[
\hat{w}_t = (1-\alpha)w_0^{\frac{1}{1-\alpha}} e^{\left(r + \frac{\kappa^2}{R}\right)t} Y_t
\]
and so
\begin{equation}
\max_{0 \leq s \leq t} \hat{w}_s = (1-\alpha)w_0^{\frac{1}{1-\alpha}} e^{\left(r + \frac{\kappa^2}{R}\right)t} \bar{Y}_t \label{tilde max 2}
\end{equation}
since we will choose $\lambda$ so that $1 - \alpha \geq 0$. Equating \eqref{tilde max 1} and \eqref{tilde max 2} gives 
\[
\bar{w}_t = w_0 e^{(1-\alpha)\left(r + \frac{\kappa^2}{R}\right)t} \bar{Y}_t^{1-\alpha}.
\]
We want to calculate our investment objective which is 
\beq
\mc{
\E\left[ \int_0^\infty e^{-\rho t} U(c_t) dt \right]
} \\
& = & \int_0^\infty e^{-\rho t} \E[U(\lambda \bar{w}_t)] dt 
\\
& = & \int_0^\infty e^{-\rho t} \left(\frac{\lambda^{1-R}}{1-R}\right) \E\left( \bar{w}_t^{1-R} \right) dt 
\\
& = & \int_0^\infty \frac{(\lambda w_0)^{1-R}}{1-R} e^{-\rho t + (1-\alpha)(1-R)\left(r+ \frac{\kappa^2}{R}\right) t} \E \left[ \bar{Y}_t^{(1-\alpha)(1-R)} \right] dt
\\
& \geq & \int_0^\infty \frac{(\lambda w_0)^{1-R}}{1-R} e^{-\rho t + (1-\alpha)(1-R)\left(r+ \frac{\kappa^2}{R}\right) t} dt
\eeq
since $\E \left[ \bar{Y}_t^{(1-\alpha)(1-R)} \right] \geq 1$. This is because $\bar{Y}_t \geq 1$ almost surely and we have  $(1-\alpha)(1-R) \geq 0$ because the feasibility condition 
\[
rw_t \geq 1 \times \bar{c}_t \Rightarrow r w_t \geq \lambda \bar{w}_t \Rightarrow w_t \geq \alpha \bar{w}_t
\]
implies that we must have $0 \leq \alpha \leq 1$, and since $R \leq R^* < 1$ by assumption, we have that $1-R > 0$.  
\\
\\
Now since $R \leq R^*$ or equivalently $\gamma_M \leq 0$ (see \eqref{gamma_M}), as explained in section \ref{marketmodeldraw}, we know that 
\[
\rho + (R-1)\left(r + \frac{\kappa^2}{2R} \right) \leq 0
\]
which implies that 
\[
-\rho + (1-\alpha)(1-R)\left( r+ \frac{\kappa^2}{R} \right) \geq (1-R) \left( \frac{\kappa^2}{2R} - \alpha r - \frac{\alpha \kappa^2}{R} \right)
\]
so we have 
\[
\E\left[ \int_0^\infty e^{-\rho t} U(c_t) dt \right] \geq \int_0^\infty \left(\frac{(\lambda w_0)^{1-R}}{1-R} \right) e^{(1-R)\left( \frac{\kappa^2}{2R} - \alpha r - \frac{\alpha \kappa^2}{R} \right)t}  dt.
\]
Recall that $\alpha = \lambda / r$. The right-hand side of the above inequality is infinite for 
\beq
0 & < &  \frac{\kappa^2}{2R} - \alpha r - \frac{\alpha \kappa^2}{R} \vspace{3mm}
\\
\Leftrightarrow \quad 0 & < & \frac{\kappa^2}{2R} - \lambda\left(1 + \frac{\kappa^2}{rR} \right) \vspace{3mm}
\\
\Leftrightarrow \quad 0 & < & \lambda < \frac{r\kappa^2}{2rR + 2\kappa^2}.
\eeq 
And one can check that for this choice of $\lambda$ we do not violate the condition $0 \leq \alpha \leq 1$ mentioned above. Therefore, taking $\lambda$ in this range allows the investor to obtain infinite expected utility which shows that the Merton problem with a drawdown constraint on consumption is ill-posed for $R \leq R^*$.

\end{proof}

\section*{Acknowledgements}
The author is very grateful to Prof. Chris Rogers for suggesting this project to the author and for carefully reading through an earlier version of this paper. The author would also like to thank Dr Michael Tehranchi for helpful advice and discussions.

\end{document}